\documentclass[preprint,12pt]{elsarticle}

\usepackage{amssymb}
\usepackage{amsmath}

\usepackage{xcolor}
\usepackage{comment}
\usepackage{bm}
\usepackage{algorithm2e}
\usepackage{algorithmic}
\usepackage{amsmath,bm}
\usepackage{subcaption}
\usepackage{comment}
\usepackage[makeroom]{cancel}
\usepackage{placeins}
\usepackage{float}
\usepackage{hyperref}
\usepackage{booktabs} 

\usepackage{amsthm} 

\theoremstyle{plain}
\newtheorem{proposition}{Proposition}

\theoremstyle{plain}

\theoremstyle{definition}

\theoremstyle{plain}
\newtheorem{remark}{Remark}
\usepackage{comment}
\journal{Journal of Computational Physics}

\begin{document}

\begin{frontmatter}



\title{VR-PIC: An entropic variance-reduction method for particle-in-cell solutions of the Vlasov–Poisson equation}


\author[mit,eth]{Victor Windhab}

\author[psi]{Andreas Adelmann}

\author[mit,psi]{Mohsen Sadr} 


\affiliation[mit]{organization={Department of Mechanical Engineering, MIT},
city={Cambridge},
postcode={MA 02139},
country={USA}}

\affiliation[eth]{organization={ETH Zurich}, country={Switzerland}}

\affiliation[psi]{
organization={Paul Scherrer Institute}, 
postcode={CH-5232},
city={Villigen},
country={Switzerland}
}

\begin{abstract}
\noindent We extend the recently developed entropic and conservative variance reduction framework \cite{sadr2023variance} to the particle-in-cell (PIC) method of solving Vlasov-Poisson equation.
We show that a zeroth-order approximation that freezes the importance weights during the velocity-space kick is stable at the expense of introducing bias. Then, we propose a correction for the weight distribution using maximum cross-entropy formulation to ensure conservation laws while minimizing the introduced bias.  In several test cases  including Sod's shock tube and Landau damping we show that the proposed method maintains the substantial speed-up of variance reduction method compared to the PIC simulations in the low signal regime with minimal changes to the simulation code.

\end{abstract}




\end{frontmatter}

\section{Introduction}
\label{sec:intro}

\noindent The kinetic theory \cite{Chapman1953,klimontovich1982kinetic} bridges the gap between continuum and molecular scales by introducing an evolution equation for the particle distribution function in the phase space, leading to a high dimensional probabilistic description of short and long-range inter-molecular interactions. Applications include rarefied gas dynamics for reentry problem \cite{moss2006orion,gallis2009kinetic}, confined plasma \cite{lanti2020orb5,sadr2022linear}, and electromechanical systems \cite{cercignani2006slow}. 
\\ \ \\
In order to solve the kinetic equations, several categories of numerical methods have been developed in the literature including direct discretization of phase space \cite{platkowski1988discrete}, moment-based methods \cite{torrilhon2016modeling,frei2023moment}, and stochastic particle methods \cite{Bird,gorji2013fokker,xu2010unified,mies2023efficient}.
As a natural treatment for the curse of high dimensionality, boundary condition, and reaction modeling, particle methods have gained significant attention in the plasma physics and rarefied gas dynamics communities thanks to the celebrated direct simulation Monte Carlo \cite{Bird1963} method for the Boltzmann and particle-in-cell method for Vlasov-type equations \cite{birdsall2018plasma}. However, particle methods suffer from the statistical noise inherent in Monte Carlo estimation of moments in the low signal limit. Mitigating noise becomes essential in resolving micro-turbulence in confined plasma \cite{nevins2005discrete,cj2021effect}, and slow electromechanical systems \cite{cercignani2006slow}. 
\\ \ \\
Several approaches have been developed in the literature to reduce the noise in prediction of macroscopic moments for particle methods. This includes the deviational ($\delta f$) methods \cite{denton1995deltaf,brunner1999collisional,homolle2007low}, Multi-Level Monte Carlo (MLMC) \cite{rosin2014multilevel}, two-weight methods \cite{sonnendrucker2015split}, filtering techniques \cite{gassama2007wavelet,ricketson2016sparse,muralikrishnan2021sparse}, correlated parallel stochastic process \cite{gorji2015variance}, and importance weights \cite{al2010excursion,al2010low}. While MLMC method \cite{rosin2014multilevel} introduces no bias in estimating moments, the obtained reduction in variance is limited to small rarefaction as the correlation between coupled levels is significantly reduced as distribution departs from equilibrium. On the other hand, the two-weight method \cite{sonnendrucker2015split}, weight-based $\delta f$ method \cite{denton1995deltaf,brunner1999collisional}, and the original importance weight method \cite{al2010excursion} suffer from numerical instability in the collisional regime. The weight-less deviational method  \cite{homolle2007low} and the parallel process method \cite{gorji2015variance} do not suffer from such instabilities, however they introduce further complexity in deployment as they require extensive modeling and changes in the base codes. Finally, filtering methods are extensively used in production codes such as ORB5 \cite{lanti2020orb5,sadr2022linear}, Warp \cite{vay2018warp} and OPAL \cite{adelmann2008opal} due to their simplicity in implementation. However, they rely on smoothing out high frequencies with the assumption that such modes are not physical leading to systematic bias \cite{jolliet2012parallel,muralikrishnan2021sparse}.
\\ \ \\
\noindent  Recently, the entropic variance reduction method has been devised to overcome the numerical instability of the exact importance weight process for the Boltzmann \cite{sadr2023variance} and Fokker-Planck equation \cite{sadr2023varianceFP}. In this method, first a stable solution to the weights process is established, e.g. using Kernel Density Estimation method, at the expense of introducing bias. Then, the introduced bias is minimized by maximizing the  cross-entropy between the smoothed out weight distribution and the target posterior distribution, while enforcing the conservation of moments. The relaxation of moments are either found exactly from the unstable weight process, or analytically if available. 
\\ \ \\
In this paper, we extend the application of entropic variance reduction method \cite{sadr2023variance,sadr2023varianceFP} and devise a scheme for the particle-in-cell method of solving the Vlasov-Poisson equation. First, we show that by keeping the weights constant during the kick in the velocity space, we can obtain a stable weight process that is exact at equilibrium. In order to minimize the introduced bias as the particle distribution departs from equilibrium, we deploy the maximum cross-entropy formulation (MxE) to update weight distribution matching the target semi-analytical relaxation of moments while minimizing the introduced bias. Here, we show that matching only up to the second-order moment is enough to obtain a high accuracy in prediction and significant speedup. The proposed method inherits the numerical advantages of the importance weights methods, i.e. with the minimal change in the base code we are able to achieve orders-of-magnitude speed-up which increases quadratically as the signal magnitude reduces.
\\ \ \\
\noindent The remainder of this paper is organized as follows. First, in the method section \S~\ref{sec:methods}, we review the Vlasov-Poisson equation, its underlying dynamics, and the importance weight method. Then, in \S~\ref{sec:eq_as_cont} we discuss the choice of the control variate, derive conservative weight process during streaming and kick with the help of maximum cross-entropy formulation in \S~\ref{sec:stream_weight}-\ref{sec:weight_process_kick}. In \S~\ref{sec:sol_alg}, we provide a pseudocode for the proposed variance-reduced method. In the results section \S~\ref{sec:results}, we demonstrate the efficiency of our method in simulating 1D Sod's shock tube and 2D Landau Damping test cases in the small signal regime. Finally, in \S~\ref{sec:conclusion}, we provide a conclusion and outlook.

\section{Methods}
\label{sec:methods}

\noindent In this work, we consider the collision-less and self-consistent Vlasov-Poisson kinetic equation of the form
\begin{flalign}
    \frac{\partial f}{\partial t} + v \cdot \nabla_x f  + \frac{q}{m} E \cdot \nabla_v f = 0.
    \label{eq:kinetic_vlasov}
\end{flalign}
Here, $f( v, x ;t)$ denotes the single-particle distribution function in the phase space with velocity $ v\in \mathbb{R}^d$ and position $ x\in \mathbb{R}^d$ at time $t \in \mathbb{R}^+$, $q$ is the elementary electron charge, $m$ the mass of each particle, and $E = -\nabla_x \phi$ is the electric field. The self-consistent electrostatic potential $\phi$ is the solution to the Poisson equation of the form
\begin{flalign}
- \nabla^2 \phi(x;t) =  \rho_0 - \rho (x;t)
\label{eq:poisson}
\end{flalign}
where $\rho_0$ denotes a constant ion charge density introduced to enforce quasi-neutrality and $\rho(x;t):=q n(x;t)$ is the electric charge density with $n(x;t):=\int f(v,x;t) dv$ denoting the electron number density.
\\ \ \\
In particle-in-cell methods, the particle distribution function $f$ is discretized using computational particles with random variable $X$ and $V$ for particle position and velocity, i.e.
\begin{flalign}
    f(v,x;t) 
    &\approx \frac{1}{\delta \Omega}\sum_{i=1}^{N_p} \delta(X^{(i)}-x) \delta(V^{(i)}-v)
\end{flalign}
where 
$\delta  \Omega$ denotes the average phase space volume occupied by each particle to satisfy normalization $\int f dv dx = N_\text{p,tot}$ giving the total number of particles in the system \cite{birdsall2018plasma} unless mentioned otherwise.
\\ \ \\
In the PIC loop, first particle charges are \textit{scattered} (deposited) onto a spatial grid to estimate the right hand side of \eqref{eq:poisson}.  Although there is a wide range of shape functions used in practice for the scatter step, for simplicity we use a Dirac delta deposition scheme in this work. Then, the electrostatic potential $\phi$ is estimated on the grid by numerically solving the Poisson equation \eqref{eq:poisson} using finite difference, finite elements, or spectral methods. For simplicity, here we use 2nd order finite differences to discretize the Poisson equation. Then, $E$  is estimated at the particle position $X$ using a form of interpolation known as \textit{gather}. Here, we use piecewise constant estimator to gather $E$ at particle positions. Finally, the underlying Vlasov dynamics
\begin{flalign}
    dV(t) &= \frac{q}{m} E( X(t);t) dt
    \\
   \textrm{and} \ \ \ \  dX(t) &= V(t) dt,
    \label{eq:vlasov_system}
\end{flalign}
can be simulated by splitting the operators, and deploying Euler, the Leap Frog or other numerical schemes to integrate $X$ and $V$ in time. We refer to the first operator as \textit{kick}, and the second as \textit{streaming}. Although the proposed variance reduction method is independent of the choice of method for the operator splitting or the discretization scheme, for simplicity in this study we consider the first-order semi-implicit Euler scheme, i.e. for time interval $t\in[t_0, t_0+\Delta t)$
\begin{flalign}
    V(t_0+\Delta t) &= V(t_0) + \frac{q}{m} E(X(t_0);t_0) \Delta t
    \\
 \textrm{and} \ \ \ \    X(t_0+\Delta t) &= X(t_0) + V(t_0+\Delta t) \Delta t~.
    \label{eq:vlasov_system_discrete}
\end{flalign}

\subsection{Review of the variance reduction method}

\noindent The idea of variance reduction (VR) with the importance weight is to relate the velocity moments of the target non-equilibrium  distribution to the ones of the control variate distribution via
{\small
\begin{flalign}
\int R(  v) f(  v,  x;t) d     v 
=& \underbrace{\int  R(  v) \left(1-w(  v,  x; t)\right) f(v, x ;t) d     v}_{I_1:=} 
\nonumber \\
& + \underbrace{\int  R(  v) f^\mathrm{cont.}(  v, x; t) d     v}_{I_2:=},
\label{eq:VR_formulatiton}
\end{flalign}
}
where the importance weights $w$ is defined as 
\begin{flalign}
w(  v,  x; t) := \frac{f^\mathrm{cont.}(  v,  x; t)}{f(  v,  x;t)}.
\label{eq:weight_def}
\end{flalign}
Here $ R(  v)$ denotes a velocity polynomial, e.g. $ R(  v) \in \{ 1,v_i,v_iv_j...\}$ for $i,j, ...=1,2,3$. 
Given $  V^{(1)},...,  V^{(N_p)}$ samples of the distribution function $f(  v,   x; t)$, the VR method uses the finite number of particles to estimate the integral $I_1$ instead of directly computing $\int R f dv$ with particles. Therefore, the variance-reduced estimate can be computed simply via
{\small
\begin{flalign}
n_\mathrm{VR}\ \Big \langle  R(  V) 
\Big \rangle_\mathrm{VR}
&:= 
n \langle (1-W)R\rangle  + \underbrace{\int  R(  v) f^\mathrm{cont.}(  v,  x; t) d     v}_{\mathrm{analytical\ computation}},
\label{eq:momR_VR}
\\
\langle (1-W)R\rangle &\approx \frac{\sum_{i=1}^{N_p}   (1-W^{(i)}) R(  V^{(i)}) }{N_p}.
\end{flalign}
}

\noindent \sloppy Here, $W$ is the random variable associated with weight $w$, and we deploy the notation $\langle R\rangle\approx \int R fdv/\int fdv$ to denote the Monte Carlo approximation of the velocity moment and  $\langle.\rangle_\mathrm{VR}$ the variance reduced estimate.
Since the moments of $f^\mathrm{cont.}$ are known analytically, the error/noise in evaluating eq.~\eqref{eq:VR_formulatiton} is only associated with computation of $I_1$. Given that the control variate  $f^\mathrm{cont.}$ is close to $f$, the Monte Carlo integration of $I_1$ is more efficient than $\int  R(  v) f(  v,  x; t) d v$ since $|1-w(  v,   x; t)|\ll 1$, leading to a variance-reduced estimation of the moments. 

\subsection{Equilibrium as the control variate}
\label{sec:eq_as_cont}

\noindent The variance reduction method leads to speed-up when the non-equilibrium distribution $f$ is close to the control variate distribution function $f^\mathrm{cont.}$. For problems in the small signal regime, the particle distribution $f$ is close to the local equilibrium distribution function, i.e. Maxwell-Boltzmann distribution function 
\begin{flalign}
f^\mathrm{eq}(  v,  x;t) = \frac{n(  x; t)}{\left(2 \pi \theta^2(  x; t)\right)^{d/2}} \exp \left( - \frac{||  v -   U(  x;t)||_2^2}{2 \theta^2(  x; t)} \right),
\label{eq:feq_loc}
\end{flalign}
\sloppy which provides a reasonable control variate  $f^\mathrm{cont.}$, as it is the equilibrium solution to the Boltzmann and Vlasov-Poisson equation. Here, $  U(x;t)=\int   v f(v,x;t) d     v/\int f(v,x;t) dv$ denotes the bulk velocity, $\theta(x;t)=\sqrt{k_b T(x;t)/m}$ the thermal velocity, $k_b$ the Boltzmann constant, $m$ the mass of each particle, and $ T(x;t)=m/(d n k_b) \sum_{i=1}^d   \int (v_i- U_i(x;t))^2 f d     v$ the temperature. 
\subsection{Weight process during advection}
\label{sec:stream_weight}

\noindent Although the local equilibrium provides a control variate close to $f$, we note that the variation of its moments in $x$ and $t$ complicates the advection of particles in the physical space. Therefore, we also define a global equilibrium as
\begin{flalign}
f^{\mathrm{eq},0}(  v) = \frac{n_0}{\left(2 \pi \theta^2_0\right)^{d/2}} \exp \left( - \frac{||  v-  U_0||_2^2}{2 \theta^2_0} \right),
\label{eq:feq_glob}
\end{flalign}
where $n_0$, $\theta_0$ and $  U_0$ denote the corresponding global equilibrium moments that are independent of space and time. These parameters are typically determined in the beginning of the simulation from the problem specification as they define the global equilibrium of the entire system.
\\ \ \\
\noindent Now, we can avoid the complication associated with evolving the importance weights of the local equilibrium in the particle advection by transforming weights to the global equilibrium distribution Eq.~(\ref{eq:feq_glob}) before applying the advection step using    
\begin{flalign}
\hat{w}(  v,  x;t)=w(  v,  x;t)  \frac{f^{\mathrm{eq},0}(  v)}{f^{\mathrm{eq}}(  v,  x; t)}~.
\label{eq:loc_to_global}
\end{flalign}
Here, we denote variable represented in the global control variate framework by $(\hat{.})$,  i.e. $\hat{w}=f^{\mathrm{eq},0}(  v)/f(  v,x;t)$. Since the global equilibrium $f^{\mathrm{eq},0}(  v)$ is invariant with respect to time and space, it can be shown that
\begin{proposition}
The importance weights \eqref{eq:weight_def} in the global frame under the spatial-independent global equilibrium \eqref{eq:feq_glob} as the control variate are conservative on the particle trajectory during streaming in the physical space, i.e.
\begin{flalign}
\frac{D\hat{w}}{Dt}\Big|_\mathrm{streaming}&=0.
\end{flalign}
\end{proposition}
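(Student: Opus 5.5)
The argument works directly on the definition $\hat{w}=f^{\mathrm{eq},0}(v)/f(v,x;t)$ from \eqref{eq:loc_to_global}, differentiated along the streaming characteristics. In the operator splitting, the streaming sub-step evolves $f$ by the free-transport part of \eqref{eq:kinetic_vlasov},
\begin{flalign*}
\frac{\partial f}{\partial t} + v\cdot\nabla_x f = 0,
\end{flalign*}
and particles move as $dX = V\,dt$ with $dV=0$. The material derivative along a particle trajectory during streaming is therefore
\begin{flalign*}
\frac{D}{Dt}\Big|_\mathrm{streaming} = \frac{\partial}{\partial t} + v\cdot\nabla_x .
\end{flalign*}
This operator contains no $\nabla_v$ term, because velocity is frozen during this sub-step.

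\textbf{Step 1.} Apply this operator to the numerator $f^{\mathrm{eq},0}(v)$ of \eqref{eq:feq_glob}. Its parameters $n_0,\theta_0,U_0$ are constant in $x$ and $t$, and $v$ is constant along the streaming trajectory. Hence
\begin{flalign*}
\frac{\partial f^{\mathrm{eq},0}}{\partial t}=0, \qquad \nabla_x f^{\mathrm{eq},0}=0, \qquad \frac{D f^{\mathrm{eq},0}}{Dt}\Big|_\mathrm{streaming}=0 .
\end{flalign*}

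\textbf{Step 2.} The denominator satisfies $\frac{Df}{Dt}\big|_\mathrm{streaming}=0$, since this is exactly the streaming part of \eqref{eq:kinetic_vlasov}. Wherever $f>0$, the quotient rule then gives
\begin{flalign*}
\frac{D\hat{w}}{Dt}\Big|_\mathrm{streaming}
= \frac{1}{f}\frac{Df^{\mathrm{eq},0}}{Dt}\Big|_\mathrm{streaming}
- \frac{f^{\mathrm{eq},0}}{f^2}\frac{Df}{Dt}\Big|_\mathrm{streaming}
= 0 .
\end{flalign*}
Equivalently, $\hat{w}$ is a ratio of two functions that are each invariant along the characteristic $x\mapsto x+v\,t$.

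\textbf{Step 3.} Check that the statement also holds for the discrete update \eqref{eq:vlasov_system_discrete}. The exact solution of the streaming sub-step is $f(v,x;t_0+\Delta t)=f(v,x-v\Delta t;t_0)$. A particle at $(X,V)$ maps to $(X+V\Delta t,V)$, so $f$ evaluated along the particle is unchanged, and $f^{\mathrm{eq},0}(V)$ is unchanged because $V$ is unchanged. Thus each particle simply carries its weight $\hat{W}^{(i)}$ through streaming.

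The main obstacle is conceptual rather than computational. We must be explicit that the derivative is taken along the streaming characteristic only, with $v$ fixed, and that this is precisely why the global control variate is needed. For the local equilibrium $f^{\mathrm{eq}}(v,x;t)$ of \eqref{eq:feq_loc}, the terms $\partial_t f^{\mathrm{eq}}+v\cdot\nabla_x f^{\mathrm{eq}}$ do not vanish, because $n$, $U$ and $\theta$ depend on $x$ and $t$. The unhatted weight $w$ is therefore not conserved, which motivates the transformation \eqref{eq:loc_to_global}. The argument also assumes $f>0$, so that $\hat{w}$ is defined on the particle support.
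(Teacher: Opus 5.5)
Your proof is correct and matches the paper's argument: the paper writes $f^{\mathrm{eq},0}=\hat w f$ and applies the product rule with $Df/Dt|_\mathrm{streaming}=0$ and $Df^{\mathrm{eq},0}/Dt|_\mathrm{streaming}=0$, which is equivalent to your quotient-rule computation. Your explicit assumption $f>0$ and the discrete check in Step 3 are valid additions that the paper leaves implicit.
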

\begin{proof}
    It is trivial to show that
    \begin{flalign}
         \frac{D(f^{\mathrm{eq},0})}{Dt} &=   \frac{D(\hat wf)}{Dt}
        \\
          & =  \hat w \frac{D(f)}{Dt} + f \frac{D(\hat w)}{Dt}
         \\
         &= f \frac{D(\hat w)}{Dt}
    \end{flalign}
    since ${Df}/{Dt}|_\mathrm{streaming}=0$ is the material derivative as the exact operator during the streaming of particles. 
    Given the global equilibrium has no variation in space and time, it follows  ${Df^{\mathrm{eq},0}}/{Dt}|_\mathrm{streaming}=0$, and in turn ${D\hat w}/{Dt}=0$. 
\end{proof}
\noindent Once particles are streamed, the inverse of this map
\begin{flalign}
w(  v,  x;t)=\hat{w}(  v,  x;t)  \frac{f^{\mathrm{eq}}(  v,  x; t)}{f^{\mathrm{eq},0}(  v)}~
\label{eq:global_to_local}
\end{flalign}
can be used to bring weights back to the local equilibrium framework.

\subsection{Weight process during kick}
\label{sec:weight_process_kick}

\noindent Consider the particles inside the cell near x in the time interval $t\in[t_0, t_0+\Delta t]$. The evolution of the particle distribution function during the kick in the velocity space can be written as
\begin{flalign}
    \frac{\partial f(v,x;t)}{\partial t}\Big|_\mathrm{kick} = - \frac{q}{m} E(x;t_0) \cdot \nabla_v f(v,x;t) 
    \label{eq:vlasov-kick}
\end{flalign}
The underlying process for the particle velocity follows
\begin{flalign}
    d V_t = \frac{q}{m} E (X_t; t_0) dt,
    \label{eq:vel_process}
\end{flalign}
which is known as the \textit{kick} in the context of particle-in-cell methods. By solving this ODE numerically for $t\in[t_0, t_0+\Delta t]$, we can find $V(t_0+\Delta t)$.  What are the particle weights after the kick $W(t_0+\Delta t)$?
\begin{proposition}
\label{prop:post-kick-weight-estimate}
(Zeroth order approximation) If the equilibrium distribution $f^\mathrm{eq}$ does not evolve during the kick, the importance weight of particles remains constant on the particle trajectory in the velocity space.
\end{proposition}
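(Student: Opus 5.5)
The argument mirrors the proof of the streaming proposition, with the material derivative along the kick characteristics in place of the streaming one. During the kick the particle trajectory is governed by \eqref{eq:vel_process}, and the associated material derivative is
\begin{flalign}
\frac{D}{Dt}\Big|_\mathrm{kick} := \frac{\partial}{\partial t} + \frac{q}{m}E(x;t_0)\cdot\nabla_v .
\end{flalign}
The kick equation \eqref{eq:vlasov-kick} says exactly that $Df/Dt|_\mathrm{kick}=0$: the distribution is transported unchanged along characteristics in velocity space. Since the field $E(x;t_0)$ is frozen over $[t_0,t_0+\Delta t]$, these characteristics are just rigid translations $V\mapsto V+\frac{q}{m}E\,(t-t_0)$.

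Next I would write the control variate as $f^\mathrm{eq}=w f$ by \eqref{eq:weight_def} and apply the product rule:
\begin{flalign}
\frac{Df^\mathrm{eq}}{Dt}\Big|_\mathrm{kick} = w\frac{Df}{Dt}\Big|_\mathrm{kick} + f\frac{Dw}{Dt}\Big|_\mathrm{kick} = f\frac{Dw}{Dt}\Big|_\mathrm{kick}.
\end{flalign}
The hypothesis that $f^\mathrm{eq}$ does not evolve during the kick is interpreted as $Df^\mathrm{eq}/Dt|_\mathrm{kick}=0$, i.e.\ the control variate is carried along the same velocity-space characteristics as $f$. Wherever $f>0$, which is where particles and hence weights live, this gives $Dw/Dt|_\mathrm{kick}=0$. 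Integrating along the trajectory then yields $W(t_0+\Delta t)=W(t_0)$.

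The main obstacle is making the hypothesis precise. A Maxwellian with fixed $n,U,\theta$, i.e.\ $\partial_t f^\mathrm{eq}=0$, is not annihilated by $E\cdot\nabla_v$, since $\nabla_v f^\mathrm{eq}=-(v-U)f^\mathrm{eq}/\theta^2$. I would therefore state explicitly that ``does not evolve'' means invariance along the kick characteristics. Concretely, $f^\mathrm{eq}$ is the local Maxwellian whose bulk velocity shifts with the kick, $U\mapsto U+\frac{q}{m}E\Delta t$, while $n$ and $\theta$ stay fixed. This is exactly what the kick does to the moments of $f$ under a uniform field in the cell.

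I would verify this reading with one line: a Maxwellian of the form $g(v-U(t))$ with $\dot U=\frac{q}{m}E$ satisfies $\partial_t g+\frac{q}{m}E\cdot\nabla_v g=0$. Under any other reading the result is only an approximation, whose error is $O(\Delta t)$ and is proportional to $E\cdot(v-U)/\theta^2$. This is the bias the paper then corrects with MxE.
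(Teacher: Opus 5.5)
Your argument is correct and is essentially the paper's proof in differential form. The paper uses $f(V_{t_0+\Delta t},x;t_0+\Delta t)\approx f(V_{t_0},x;t_0)$ along the kick characteristic to write $W_{t_0+\Delta t}\approx W_{t_0}\, f^\mathrm{eq}(V_{t_0+\Delta t},x;t_0+\Delta t)/f^\mathrm{eq}(V_{t_0},x;t_0)$, and then sets this ratio to one by assumption; that is the integrated counterpart of your identity $Df^\mathrm{eq}/Dt = f\,Dw/Dt$.

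Your explicit reading of the hypothesis is an improvement on the paper. Taking it as invariance of $f^\mathrm{eq}$ along the kick characteristics (e.g.\ a Maxwellian whose bulk velocity shifts by $\frac{q}{m}E\Delta t$) is exactly what makes that ratio equal to one. The paper only says ``the local equilibrium is conserved during the kick''; read as fixed moments, that phrase makes the ratio only $1+O(\Delta t)$.
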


\begin{proof}
 Let us inspect the weight definition after the kick at $t_0+\Delta t$, i.e.
\begin{flalign}
    W_{t_0+\Delta t} := \frac{f^\mathrm{eq.}(V_{t_0+\Delta t},x;t_0+\Delta t)}{f(V_{t_0+\Delta t},x;t_0+\Delta t)}.
\end{flalign}
Since PIC resolves the non-equilibrium distribution directly by solving \eqref{eq:vel_process}, we have
\begin{flalign}
f(V_{t_0+\Delta t},x;t_0+\Delta t) \approx  f(V_{t_0},x;t_{0})
\end{flalign}
on the particle trajectory in the velocity space using the method of characteristics. Therefore, the post-kick weight can be simplified to
\begin{flalign}
    W_{t_0+\Delta t} &\approx 
    \frac{f^\mathrm{eq.}(V_{t_0+\Delta t},x;t_0+\Delta t)}{f(V_{t_0},x;t_0)}
    \\
    &= 
    \frac{f^\mathrm{eq.}(V_{t_0},x;t_0)}{f(V_{t_0},x;t_0)} \frac{f^\mathrm{eq.}(V_{t_0+\Delta t},x;t_0+\Delta t)}{f^\mathrm{eq.}(V_{t_0},x;t_0)}
    \\
    &= W_{t_0} \frac{f^\mathrm{eq.}(V_{t_0+\Delta t},x;t_0+\Delta t)}{f^\mathrm{eq.}(V_{t_0},x;t_0)}.
\end{flalign}
Assuming that the local equilibrium is conserved during the kick, we can find
\begin{flalign}
    W_{t_0+\Delta t} &\approx W_{t_0}.
    \label{eq:fixed_W}
\end{flalign}
We refer to eq.\eqref{eq:fixed_W} as the \textit{zeroth order approximation}.
\end{proof}
\noindent Although the kinetic energy is conserved during the particle kick in the velocity space with the electric field $E$, the Vlasov-Poisson equation introduces a local source term at the cell near $x$ in the momentum equation during the kick that contradicts the assumption in the proposition \ref{prop:post-kick-weight-estimate}. Therefore, the naive use of \eqref{eq:fixed_W} in estimating the post-kick weights introduces bias since the local equilibrium is not conserved during the kick.
\\ \ \\
Furthermore, the explicit particle solution to the Vlasov-Poisson equation does not strictly enforce conservation laws for a finite number of particles. Next, we derive the post-kick moments for the conservation laws, and illustrate how maximum cross-entropy (MxE) can be used to enforce them on discrete particles.

\begin{proposition}
\label{prop:balance_moments}
The kick in the Vlasov equation introduces a source term in the local momentum balance
\begin{flalign}
    \int v_i  \frac{\partial}{\partial t}   f(v,x;t) dv \Big|_\mathrm{kick} &= \frac{q}{m} n(x; t) E_i(x;t),
\end{flalign}
near the position $x$ at time $t$, while mass and kinetic energy are conserved.
\end{proposition}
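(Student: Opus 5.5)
The plan is to take velocity moments of the kick equation \eqref{eq:vlasov-kick} and integrate by parts in $v$. During the kick the field $E(x;t_0)$ is frozen at the cell near $x$, so it does not depend on $v$. Boundary terms vanish because $f$ and its polynomial moments decay as $\|v\|\to\infty$, as for any finite-energy distribution and in particular for perturbations of \eqref{eq:feq_loc}. For a velocity polynomial $R(v)$ the identity to use is
\begin{flalign*}
\int R(v)\,\frac{\partial f}{\partial t}\Big|_\mathrm{kick} dv = -\frac{q}{m}E_j\int R\,\partial_{v_j} f\,dv = \frac{q}{m}E_j \int (\partial_{v_j}R)\, f\, dv .
\end{flalign*}

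\textbf{Step 1 (mass).} Take $R=1$. Then $\partial_{v_j}R=0$, so $\partial_t n|_\mathrm{kick}=0$ and mass is conserved.

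\textbf{Step 2 (momentum).} Take $R=v_i$. Then $\partial_{v_j}v_i=\delta_{ij}$, which gives $\int v_i\,\partial_t f\,dv = \frac{q}{m}E_i\int f\,dv = \frac{q}{m}nE_i$. This is the claimed source term. Combined with Step 1, it also gives $\partial_t U_i|_\mathrm{kick}=\frac{q}{m}E_i$.

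\textbf{Step 3 (energy).} Take $R=\|v\|^2$. Then $\partial_{v_j}R=2v_j$, so $\int \|v\|^2\partial_t f\,dv = 2\frac{q}{m}\,nU\cdot E$, which is generally nonzero. The total second moment therefore changes, but only through the bulk flow. The conserved quantity is the internal (peculiar) kinetic energy $\int\|v-U\|^2 f\,dv = d\,n\theta^2$, the one that enters the temperature in \S\ref{sec:eq_as_cont}. To show this, write $\int \|v-U\|^2 f\,dv=\int\|v\|^2 f\,dv - n\|U\|^2$ and differentiate. Steps 1 and 2 give $\partial_t(n\|U\|^2)=2nU\cdot\partial_t U = 2\frac{q}{m}nU\cdot E$, and this cancels the previous expression exactly. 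Hence $\partial_t(n\theta^2)|_\mathrm{kick}=0$ and $\theta$ is unchanged by the kick. Equivalently, one can differentiate $\int\|v-U(t)\|^2 f\,dv$ directly: the term from $\partial_t U$ vanishes because $\int (v-U)f\,dv=0$, and the remaining term is $\frac{q}{m}E_j\int 2(v_j-U_j)f\,dv=0$.

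The main obstacle is interpretive rather than technical. The statement's ``kinetic energy is conserved'' has to be read as conservation of the thermal energy in the comoving frame, consistent with the remark after Proposition~\ref{prop:post-kick-weight-estimate}. The total second moment picks up the work term $\frac{q}{m}nU\cdot E$, so I would state explicitly that the invariant is $\theta$ (and $n$), while $U$ is shifted by $\frac{q}{m}E\,\Delta t$. These are exactly the post-kick target moments needed for the MxE correction.
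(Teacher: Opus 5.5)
Your proof is correct and follows essentially the paper's route: multiply the kick equation by $1$, $v_i$ and $\|v-U\|^2$, integrate by parts in $v$, and let the boundary terms vanish. For energy the paper works directly with the central moment $\xi_k\xi_k$, which is your ``equivalently'' argument. Your added observation that the $\partial_t U$ contribution drops out because $\int (v-U)f\,dv=0$ justifies a step the paper leaves implicit.
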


\begin{proof}
By multiplying eq.~\eqref{eq:vlasov-kick} by $R\in \{ 1, v_i, \xi^2\}$, where $\xi_i = v_i - U_i(x;t)$, $\xi^2 = \sum_j \xi_j^2$ and $i,j=1,...,d$, and taking integrals in velocity space, using the integration by part, divergence theorem, and assuming $f,\nabla_vf,...\rightarrow 0$ faster than any polynomial $v_i,v_iv_j,...\rightarrow \infty$, we obtain the balance equation for the mass

\begin{flalign}
    \int \frac{\partial}{\partial t} f(v,x;t) dv \Big|_\mathrm{kick} &= - \frac{q}{m}E(x;t) \cdot \int \nabla_v f(v,x;t) dv   \nonumber \\
    &=0,
\end{flalign}
momentum
{\small
\begin{flalign}
    \int v_i \frac{\partial}{\partial t}   f(v,x;t) dv \Big|_\mathrm{kick} &= - \frac{q}{m} E_j (x;t)  \int {v_i} \nabla_{v_j} f(v,x;t) dv
    \nonumber \\
    &=  \frac{q}{m} E_j(x;t) \int \delta_{ij}  f(v,x;t) dv \   \nonumber
    \\
    &=  \frac{q}{m} n(x; t) E_i(x;t),
\end{flalign}
}

\noindent and energy
{\footnotesize
\begin{flalign}
    &\int  \xi_k \xi_k \frac{\partial}{\partial t}  f(v,x;t) dv \Big|_\mathrm{kick}\nonumber \\
    &= - \frac{q}{m} E_j(x;t) \left( \int v_k v_k \nabla_{v_j} f(v,x;t) dv
-2 U_k(x;t) \int v_k \nabla_{v_j} f(v,x;t) dv 
+ U_k(x;t)U_k(x;t)  \int \nabla_{v_j} f(v,x;t) dv
    \right) \nonumber 
    \\
    &= - \frac{q}{m} E_j(x;t)
    \left( 
    - \int 2 v_j  f(v,x;t) dv
    +
    2U_j(x;t) \int f(v,x;t) dv
    \right) \nonumber
    \\
    &= - \frac{1}{m} E_j(x;t)
    \left( 
    - 2 \rho U_j(x;t)
    +
    2\rho U_j(x;t)
    \right)
    \nonumber \\
    &= 0.
\end{flalign}}
\end{proof}

\noindent 
Due to the mass conservation of the kick operator, i.e.
\begin{flalign}
    \rho (x;t)&=\rho (x;t_0) \nonumber
    \\
    &=\rho(x;t_0+\Delta t) \ \ \ \ 
    \forall t\in [t_0, t_0+\Delta t],
\end{flalign}
the corresponding potential $\phi$ and electric field $E$ become constant with respect to time during the kick. Therefore, the solution to the outcome system of ODEs for the balance of central moments in the Proposition~\ref{prop:balance_moments} becomes

{\footnotesize
\begin{flalign}
     &\int \begin{pmatrix}
 1  \\
   v_i   \\
  \xi_k \xi_k
\end{pmatrix}
f(v,x;t_0+\Delta t) dv
=
 \int \begin{pmatrix}
 1  \\
   v_i   \\
  \xi_k \xi_k
\end{pmatrix}
f(v,x;t_0) dv
+
 \frac{q}{m} \begin{pmatrix}
 0  \\
  n(x;t_0) E_i(x;t_0)    \\
 0   
\end{pmatrix} \Delta t~.
\label{eq:sol_ode}
\end{flalign}
}

\noindent We use this analytical expression for the evolution of moments up to energy during kick \eqref{eq:sol_ode} and estimate the variance-reduced non-equilibrium moments at $t_{0}+\Delta t$ via
\begin{flalign}
    n_\mathrm{VR} \langle R (v) \rangle_\mathrm{VR} \Big|_{t_0+\Delta t}
\approx
 n_\mathrm{VR} \langle R(v) \rangle_\mathrm{VR} \Big|_{t_0}
+
 \frac{q}{m} \begin{pmatrix}
 0  \\
  n_\mathrm{VR} {E}_{VR}    \\
 0   
\end{pmatrix} \Delta t,
\label{eq:post_kick_VR_moments}
\end{flalign}
where $n_\mathrm{VR}$ is the variance reduced estimate of number density, and $\phi_\mathrm{VR}$ the variance reduced potential computed from
\begin{flalign}
- \nabla^2 \phi_\mathrm{VR}(x,t_0) =  \rho_0 - \rho_\mathrm{VR}(x,t_0)~.
\label{eq:poisson_vr}
\end{flalign}

\noindent Having estimated $n_\mathrm{VR}$, $E_\mathrm{VR}$, and $\langle R (v) \rangle_\mathrm{VR}$
at $t_0+\Delta t$, we can now calculate moments of weighted samples $\langle W R \rangle $ using eq.~\eqref{eq:VR_formulatiton} and \eqref{eq:post_kick_VR_moments} as
\begin{flalign}
    &n_\mathrm{VR} \langle R(v) \rangle_\mathrm{VR}\Big|_{t_0+\Delta t} =  \int R(v) f^\mathrm{eq,0}(v) dv + n \langle (1-W)R(V)\rangle \Big|_{t_0+\Delta t} \nonumber
     \\
    \implies &
    n \langle W R(v) \rangle \Big|_{t_0+\Delta t} = 
     \int R(v) f^\mathrm{eq,0}(v) dv + n \langle R(v) \rangle \Big|_{t_0+\Delta t}
    - {n_{VR}} \langle R(v) \rangle_\mathrm{VR}\Big|_{t_0+\Delta t}.
    \label{eq:post_kick_weight_def}
\end{flalign}
This gives us the target moments for the weighted particles after the kick
{\small \begin{flalign}
    \mu :=  
    \frac{1}{n} \int R(v) f^\mathrm{eq,0}(v) dv + \langle R(v) \rangle \Big|_{t_0+\Delta t}
    - \frac{n_\mathrm{VR}}{n} \langle R(v) \rangle_\mathrm{VR}\Big|_{t_0}
    -\frac{q n_{VR}}{m n} \begin{pmatrix}
 0  \\
 E_{VR}    \\
 0   
\end{pmatrix} \Delta t
    \label{eq:post_kick_wmom}
\end{flalign} }

\noindent which we can enforce on post-kick particle distribution using the MxE method. 

\begin{remark}
The post-kick moment balance derived in Proposition \ref{prop:balance_moments} is exact for continuous functions.
\end{remark}

\begin{remark}
The post-kick relation for the target moments in eq. \eqref{eq:post_kick_wmom} is valid regardless of how the moments $\langle R\rangle$  are numerically computed, i.e. it holds true for any shape function in the scatter and gather. However, in this paper we consider particles as summation of Dirac delta.
\end{remark}

\noindent Considering the zeroth-order approximation of the post-kick weight described in Proposition~\ref{prop:post-kick-weight-estimate} as the prior $\mathcal{F}^\mathrm{prior}$ with weights $W^\mathrm{prior}$,
and denoting the target $M$ post-kick moments $\mu\approx \langle W R(v) \rangle$ at $t_0+\Delta t$ in Eq.~\eqref{eq:post_kick_wmom} as the target moments,
the extremum of the cost functional
\begin{flalign}
C[\mathcal{F}] := \int \mathcal{F} \log\left(\mathcal{F}/\mathcal{F}^\mathrm{prior}\right) d     v 
& + \sum_{i=1}^M \lambda_i   \left(\int R_i  \mathcal{F} d     v-\mu_i \right)
\label{eq:cross_MED_cost}
\end{flalign}
provides us with the maximum cross-entropy distribution function (MxE)  \cite{debrabant2017micro}, i.e.
\begin{flalign}
\delta C/\delta \mathcal{F} = 0 \implies 
\mathcal{F}^* = \mathcal{F}^\mathrm{prior} \exp\left(\sum_{i=1}^M \lambda_i R_i \right).
\label{eq:calF}
\end{flalign}
Here, $\delta C/\delta \mathcal F$ denotes the  Gateaux derivative of cost functional $C[\mathcal F]$ with respect to distribution function $\mathcal{F}$ \cite{giaquinta2013calculus}. Using the definition of weights eq.~\eqref{eq:weight_def}, we can find the update rule for the weights as
\begin{flalign}
W^* = W^\mathrm{prior} \exp\left(\sum_{i=1}^M \lambda_i R_i  \right).
\label{eq:calW}
\end{flalign}
\noindent The Lagrange multipliers, $  \lambda$, appearing in \eqref{eq:calW} can be computed by following the Newton-Raphson approach. As formulated in \cite{debrabant2017micro},  the unconstrained dual formulation $D(  \lambda)$ provides us with the gradient $  g = \nabla D(  \lambda) $ and Hessian $  H(  \lambda)=
\nabla^2 D(  \lambda)$ as
\begin{flalign}
g_i &= \mu_i - \int R_i  \mathcal{F} ^\mathrm{prior} \exp\left(\sum_k \lambda_k R_k  \right) d     v\ \ \ \textrm{for}\ i=1,...,M
\label{eq:gradient_ME}
\\
H_{i,j}&= - \int R_i R_j \mathcal{F}^\mathrm{prior} \exp\left(\sum_k \lambda_k R_k (  v) \right) d     v \ \ \ \textrm{for}\ i,j=1,...,M
\label{eq:Hessian_ME}
\end{flalign}
leading to the iterative scheme
\begin{flalign}
  \lambda^{(k+1)} =   \lambda^{(k)} -    H^{-1}(  \lambda^{(k)})   g(  \lambda^{(k)}),
\label{eq:lambda_update}
\end{flalign}
where $k$ indicates the iteration index. In Algorithm~\ref{alg:ME_optimization}, we present the Newton-Raphson algorithm of solving MxE optimization problem given target moments $\mu$ and prior particle distribution.
\\ \ \\
{\small 
\begin{algorithm}[H]
 \caption{The Newton-Raphson algorithm to solve the MxE optimization problem with $N_p$ weighted samples of the distribution function with prior weights $ W^{\mathrm{prior},\ (i)}\ \text{for} \ i=1,...,N_p$ matching M target  polynomial $   R(   v)$ moments denoted by $   \mu$ with the tolerance of $\epsilon$. Here we use the convention that $\mu_1$ is the zeroth order moment (density).}
\SetAlgoLined
-Initialize $W^{(i)}= W^{\mathrm{prior},\ (i)}$ for $i=1,..,N_p$\;
 \While{$||\sum_{i=1}^{N_p} W^{(i)}    R(   V^{(i)})/N_p-   \mu    ||_1/||    \mu||_1>\epsilon$}{
-Estimate gradient $ g_{l-1} = \mu_l - \sum_{i=1}^{N_p} R_l(   V^{(i)}) W^{(i)}/N_p$ for $l={2},...,M$\;
-Estimate Hessian $ H_{k-1,l-1} =  - \sum_{i=1}^{N_p} R_k(   V^{(i)})R_l(   V^{(i)}) W^{(i)}/N_p$  for $k,l={2},...,M$\;
-Solve the ${(M-1)\times (M-1)}$ linear system $   H    \lambda^* =    g$\; 
-Update the estimate of Lagrange multipliers  $   \lambda \leftarrow    \lambda -    \lambda^*$\;
 -Update weights $W^{(i)} = W^{\mathrm{prior},\ (i)} \exp\left(\sum_{{k=2}}^M \lambda_{k-1} R_k (   V^{(i)}) \right)  $  for $i=1,...,N_p$\;
 -Enforce mass conservation $W^{(i)} = W^{(i)} \left(\mu_{1}/\sum_{j=1}^{N_p} W^{(j)}\right)$  for $i=1,...,N_p$\;
 }
 \label{alg:ME_optimization}
\end{algorithm}
}
\ \\
\begin{remark} Since $\mathcal{F}^*$ in eq.~\eqref{eq:calF} is the unique solution to \eqref{eq:cross_MED_cost} that minimizes the cross-Shannon entropy from the prior, it follows directly
\begin{flalign}
\sum_i^{N_p} W^{*, (i)} \log\left(W^{*, (i)}/W^{\mathrm{prior},\ (i)}\right) \leq 
\sum_i^{N_p} W^{(i)}\log\left(W^{ (i)}/W^{\mathrm{prior},\ (i)}\right)
\end{flalign}
for any weight distribution $W$ that satisfies the semi-analytical moment constraints eq.~\eqref{eq:post_kick_wmom}.  This inequality ensures stability of the weight process in the cross-entropy sense.
\end{remark}

\begin{remark}
The introduced variance reduction method preserves the positivity of the particle weights. This is due to the fact that the transformation maps between local and global equilibrium \eqref{eq:loc_to_global}-\eqref{eq:global_to_local}, as well as the maximum cross-entropy formulation \eqref{eq:calW} strictly preserve the positivity of the initial weights.
\end{remark}

\begin{remark}
\label{remark:ill_cond_MxE}
The proposed VR-PIC method can be generalized to any $M$ moments as far as the highest order term is even. Although the Hessian matrix \eqref{eq:Hessian_ME} becomes ill-conditioned for polynomial basis functions as $M\rightarrow \infty$  in the limit of realizability \cite{hauck2008convex}, MxE is well-conditioned for application in variance reduction as the distribution remains close to equilibrium.
\end{remark}

\subsection{VR-PIC algorithm}
\label{sec:sol_alg}

\noindent 
In this section, we put together all essential elements of VR-PIC to build its solution algorithm. Besides the basic operations of the PIC method including scatter charges to a grid, solve Poisson equation on the grid, and gathering the electric field on the particle position, the proposed VR method introduces a few extensions. First, particles need to carry a weight that maintains the correlation to equilibrium distribution function. Second, the weights need to be mapped between the global and local equilibrium before and after the kick step.  Third, after the kick, the particle weights need to be corrected give the semi-analytical post-kick moments.
In Algorithm~\ref{alg:ME-VP}, we present the overall pseudo-code of the VR-PIC. We emphasize that the proposed method does not affect the dynamics of the non-equilibrium distribution function. The attached evolution of particle weight can be seen as a parallel process that allows variance reduction in the prediction of moments leaving the underlying dynamics untouched.
\\ \ \\
{\small
\begin{algorithm}[H]
 \caption{Algorithm for the VR-PIC solution to Vlasov-Poisson eq. using importance weights. Here, $N_\mathrm{cell}$ indicates number of cells in physical space, $N_{p,\mathrm{tot}}$
 denotes total number of particles,  and $t_\mathrm{final}$ the total simulation time.}
\SetAlgoLined
 -Initialize particles in the phase space\;
 \While{$t<t_\mathrm{final}$}{
 -Scatter particles on a grid and estimate $\rho$ and $\rho_\mathrm{VR}$\;
 -Solve Poisson eq. \eqref{eq:poisson} for $\phi$\;
  -Solve Poisson eq. \eqref{eq:poisson_vr} for $\phi_\mathrm{VR}$\;
 -Compute electric field $E$ and $E_\mathrm{VR}$\;
  -Map weights from global to local equilibrium using Eq.~\eqref{eq:global_to_local}\;
 \For{$i=1,...,N_\mathrm{cells}$}{
 -Kick particles in the velocity space solving Eq. \eqref{eq:vlasov-kick}\;
 -Compute the post-kick weighted moments \eqref{eq:post_kick_wmom}\;
 -Perform the ME optimization Algorithm~\ref{alg:ME_optimization}\;
}
 -Map 
 weights
 from local back to global equilibrium Eq.~\eqref{eq:loc_to_global}\;
  -Stream particles $ X^{(i)}= X^{(i)}+ V^{(i)} \Delta t$ for $i=1,...,N_{p,\mathrm{tot}}$\;
  -Apply boundary conditions\;
  -Compute moments for post-processing\;
  -$t=t+\Delta t$\;
 }
 \label{alg:ME-VP}
\end{algorithm}
}

\section{Results}
\label{sec:results}

\subsection{Sod's Shock Tube}

    

\noindent
A wide range of instabilities in collision-less confined plasma is related to gradient-driven modes such as drift waves \cite{horton1999drift} and Ion Temperature Gradient (ITG) \cite{cole2021tokamak}.
As a relevant test case, we consider the one-dimensional ($d=1$) Sod shock tube problem \cite{sod1978survey} in order to investigate the accuracy and computational savings of the proposed VR-PIC solution to the Vlasov-Poisson system of equations against the PIC approach. Consider a solution domain in physical space with the relevant direction in $x_1 \in [0,L]$. The other spatial dimensions are assumed to be large enough so that one only needs to simulate the dynamics in $x_1$ direction. The domain is discretized into $N_{x_1} = 50$ cells of size $\Delta x_1= 0.02$. We uniformly initialize the position of particles of mass $m=1$ and charge $q=1$ in $x_1$ such that a discontinuity in the number density is formed at $x_1=L/2$, i.e.  
\begin{flalign}
\rho(x_1,t=0) =
\left\{\begin{matrix}
 \rho_0(1+\alpha/2)& \ \ \text{for}\ x_1<L/2\\ 
\rho_0(1-\alpha/2)& \ \ \text{otherwise.}
\end{matrix}\right. 
\label{eq:init_rho_sod}
\end{flalign}
\noindent Here, we consider $\rho_0 = 1$, and $\alpha:=\Delta \rho/\rho_0=(\rho_\mathrm{left}-\rho_\mathrm{right})/\rho_0$ which denotes the initial perturbation in the charge density. Initially, we distribute particles uniformly in the left and right side of the initial discontinuity given the corresponding densities. We sample the velocity of particles from the Maxwell-Boltzmann equilibrium distribution function with initial temperature $T( x_1, t=0)=T_0$, set $k_bT_0/m=1$, and the initial bulk velocity of $ U_0:=U( x_1, t=0)= (0,0,0)^T$. In order to simplify analysis, we consider specular reflection as the boundary condition for the particles leaving the domain and study the solution before the shock and rarefaction waves reach the boundary.
\\ \ \\
\noindent The evolution of the PIC and the proposed VR-PIC method is done as follows. The particle properties are scattered onto the grid using Nearest Grid Point (NGP) deposition to approximate the charge density $\rho(x;t)$ and form the right hand side of the Poisson eq. \eqref{eq:poisson}. 
 We discretize the Poisson equation using a second-order central finite difference scheme and solve the resulting tridiagonal linear system directly in $O(N_{x_1})$ complexity and $O(1)$ extra memory \cite{teukolsky1992numerical,golub2013matrix}. Here, we apply homogeneous Neumann boundary conditions for the potential, i.e. $\nabla_x \phi = 0$ at the domain boundaries $x=0$ and $L$, which ensures a vanishing electric field at the walls consistent with global charge neutrality. All simulations for this test case were performed on a single core and thread of Intel Xeon Gold 6150 CPU. 




\subsubsection{Stability}

\noindent A necessary requirement for the proposed entropic variance-reduction (VR) method is the stability of the prior particle weight process. As discussed in \S~\ref{sec:sol_alg}, we consider the zeroth-order approximation for the weight process on the particle trajectory during the kick as the prior, see Proposition.~\ref{prop:post-kick-weight-estimate}, and deploy MxE to correct the bias in the posterior while imposing semi-analytical moment constraints \eqref{eq:post_kick_VR_moments}.
We test the stability of the weight process by simulating VR-PIC solution to the Sod's shock tube problem with the initial perturbation of $\alpha=0.01, 0.2, 0.4$ and $0.8$ using in total $N_{p,\mathrm{tot}}=2\times 10^{5}$ particles, and deploy the time step size of $\Delta t=0.002$ for $t\in[0,t_f]$ where $t_f=1$. 
In Fig.\ref{fig:weight_over_time}, we show the evolution of the weights in $||.||_\infty$-norm with increasing signal strengths $\alpha$ as a function of time equipped with and without MxE correction. Here, we also show the average number of iterations taken for the MxE algorithm \ref{alg:ME_optimization} to converge with the tolerance of $\epsilon=10^{-8}$. This result clearly indicates a controlled weight growth for VR-PIC with and without MxE correction, specifically in the low signal regime as expected.

\begin{figure}[H]
    \centering \hspace*{-2.2cm}
    \includegraphics[width=1.3\linewidth]{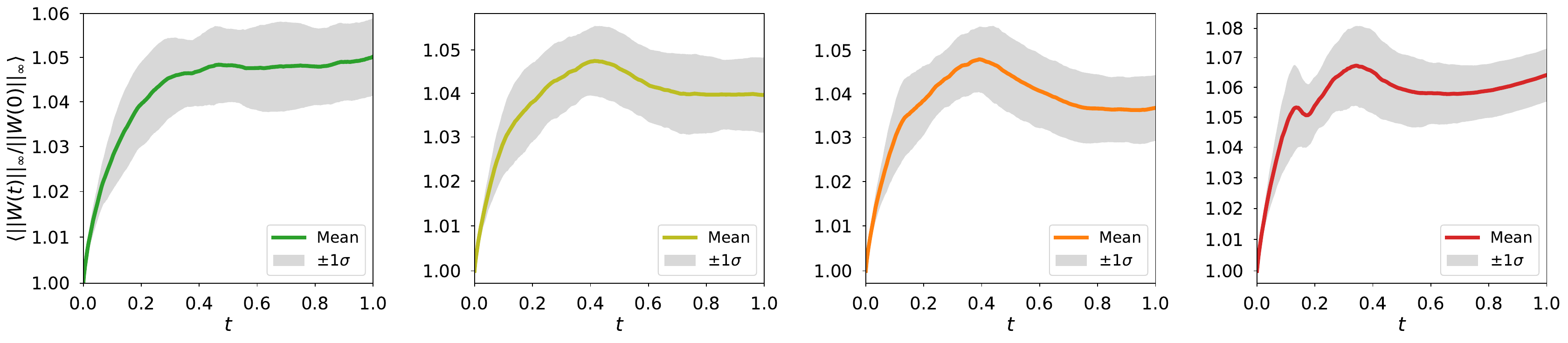}
    \\ \vspace{0.5cm}
    \hspace*{-2.2cm} \includegraphics[width=1.3\linewidth]{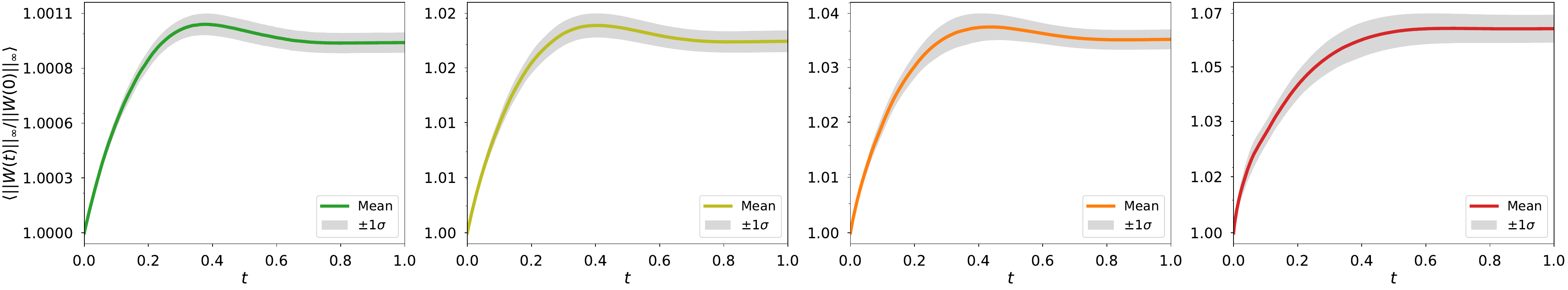}
    \\ \vspace{0.5cm}
   \hspace*{-2cm}  \includegraphics[width=1.3\linewidth]{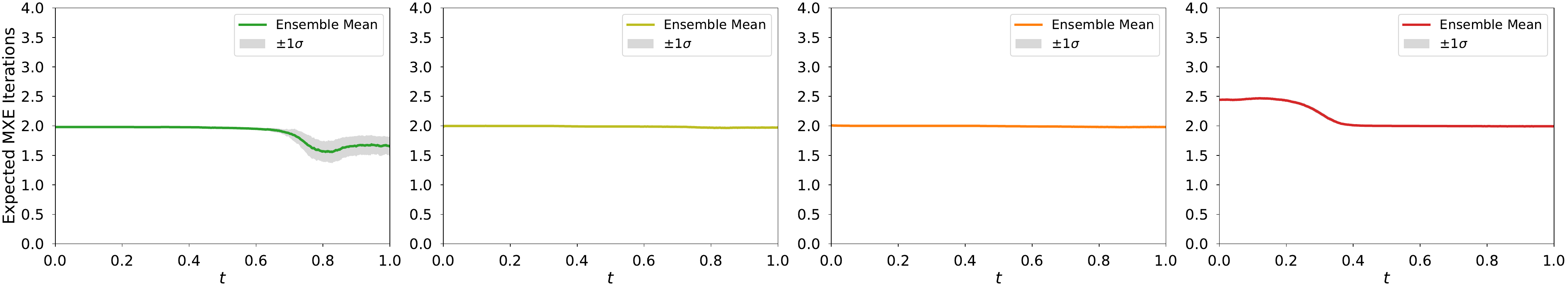}
    \caption{Normalized maximum particle weight $\langle \lVert W(t) \rVert_\infty / \lVert W(0) \rVert_\infty \rangle$ as a function of time $t$ for increasing perturbation amplitudes $\alpha$ for VR-PIC without MxE (top) and with MxE (middle), as well as the average number of iterations needed for the MxE algorithm \ref{alg:ME_optimization} to converge. Solid lines represent the mean over 100 ensembles with the respective shaded areas indicating the standard deviation ($\pm 1\sigma$).
    }
    \label{fig:weight_over_time}
\end{figure}

\subsubsection{Consistency}


\noindent Next, we investigate the accuracy of proposed VR-PIC against a reference PIC solution.
For this comparison, first we generated a \textit{low-noise} reference solution by averaging the results of a PIC simulation of shock-tube with $\alpha=0.01$ and $0.2$ over $3\times  10^6$ and $2\times 10^5$ ensemble, respectively. Then, we compute the VR-PIC solution averaged over only $100$ and $2000$ ensembles in order to achieve a similar uncertainty in prediction as the reference solution. 
\\ \ \\
In Fig.~\ref{fig:sod_profiles}, we show the evolution of the normalized density and temperature profiles at $t/\Delta t= 10, 30, 50, 70$ time steps. The black markers represent the high-ensemble PIC reference solution, whereas the solid blue and dashed red lines show the results of the low-ensemble VR-PIC simulation with and without MxE correction. 
Since the VR-PIC algorithm without MxE correction does not strictly ensure conservation laws, see Proposition~\ref{prop:balance_moments}, its solution slightly deviates from the reference solution. However, as expected, the solution of the proposed VR-PIC method equipped with MxE correction captures the reference solution with a high accuracy.

\begin{figure}[H]
    \centering \hspace*{-2cm}
    \includegraphics[width=1.3\linewidth]{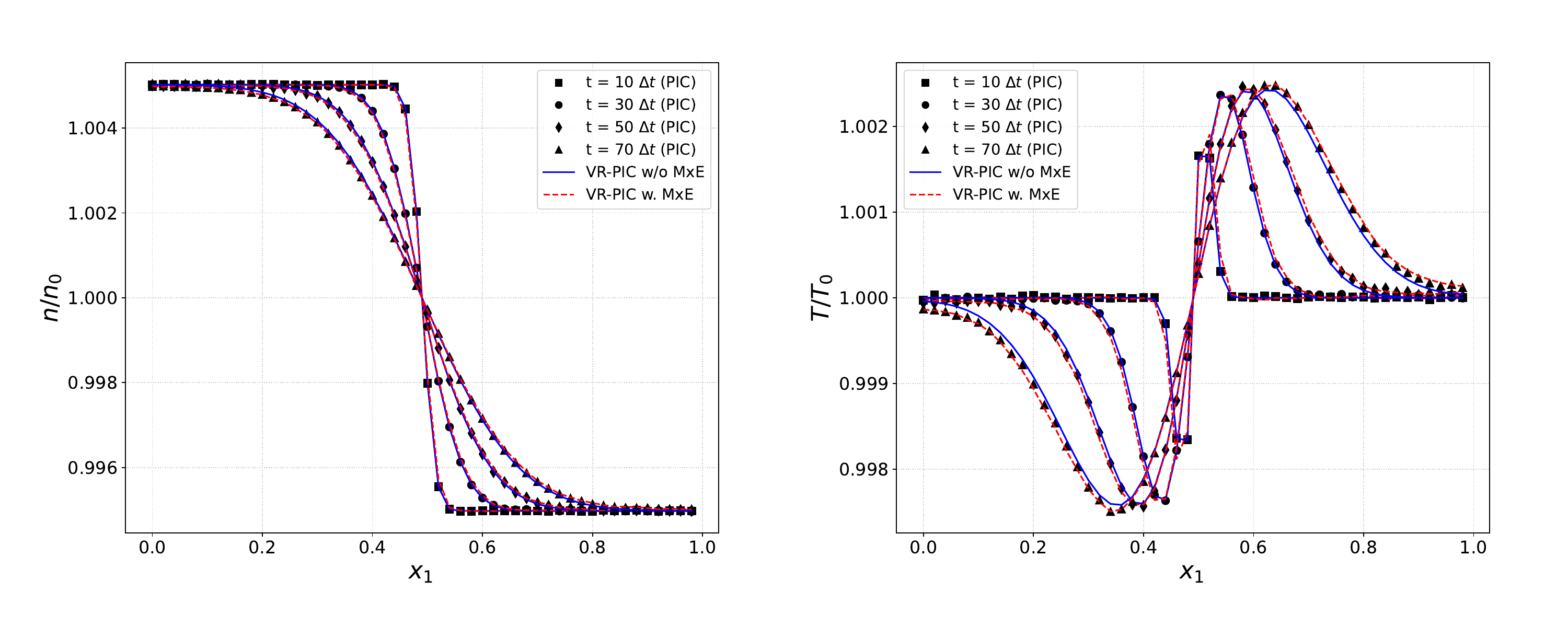}
    \\ \hspace*{-2cm}
    \includegraphics[width=1.3\linewidth]{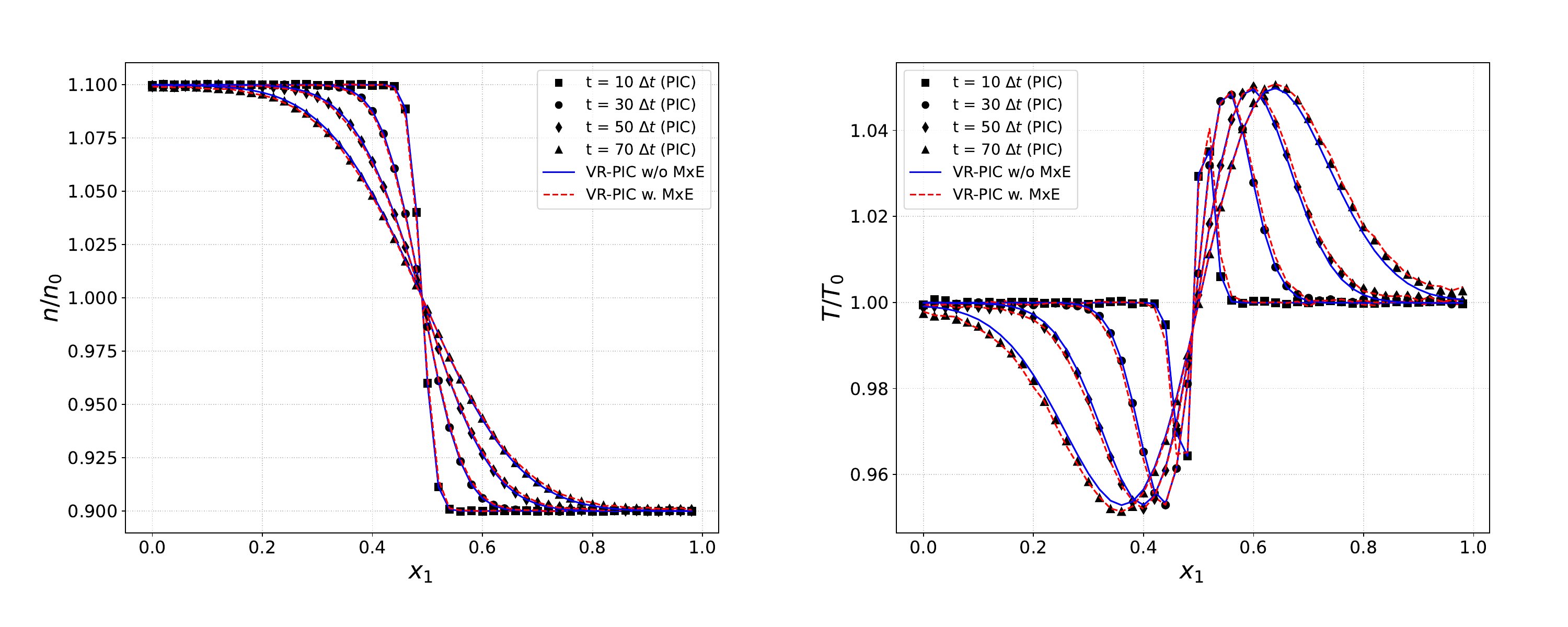}
    \caption{Normalized number density ($n/n_0$) and temperature ($T/T_0$) profiles for the Sod's shock tube problem at different times for initial perturbation of $\alpha=0.01$ (top) and $0.2$ (bottom). The black markers represent the reference solution obtained using the PIC simulation averaged over $3 \times 10^6$ and $2 \times 10^5$ ensembles, respectively. The VR-PIC  solution with (red) and without  maximum cross-entropy correction (blue) are obtained using 100 and 2000 ensembles, respectively. 
    }
    \label{fig:sod_profiles}
\end{figure}


\subsubsection{Computational Saving}

\noindent Since the solution of the Vlasov-Poisson equation to the Sod's shock tube test case with initial perturbation $\alpha=0.01$ and $\alpha=0.2$  obtained with VR-PIC method equipped with MxE correction using only $100$ ensemble is matching the reference PIC using $10^6$ and $2\times 10^3$ ensemble respectively at a high accuracy and similar noise level, the VR-PIC method achieves almost one-to-four order of magnitude reduction in the computational cost with negligible overhead. In Table~\ref{tab:computational_cost}, we report the execution time, as well as other cost related parameters. Clearly, as the signal strength $\alpha$ reduces in magnitude, VR-PIC method becomes more efficient. Next, we investigate this trend.



\begin{table}[ht]
    \centering
    \caption{Comparison of averaged computational cost between the reference PIC and the proposed VR-PIC solution for the 1D Sod's shock tube problem.}
    \label{tab:computational_cost}
     \resizebox{\textwidth}{!}{
    \begin{tabular}{c c c c c c}
        \toprule
        $ \alpha$ &
        {Method} &${N_{p,\mathrm{tot}}}$ & {Ensembles} & {Runtime/Ens. [s]} & {Total Runtime} [min] \\
        \midrule
        0.01 &
        PIC  & $\approx 2 \times 10^5$ & $3 \times 10^6$ & 0.18 & $\approx$ 9000.0 \\
        & VR-PIC          & -- & $10^2$    & 0.53 & $\approx$ 0.9 
        \\
        \midrule
        0.2 &
        PIC  & $\approx 1 \times 10^5$ & $2 \times10^5$ & 0.011 & $\approx$ 36.7  \\
        & VR-PIC          & -- & $2 \times 10^3$    & 0.026 & $\approx$ 0.9 
        \\
        \bottomrule
    \end{tabular}
    }
\end{table}

\subsubsection{Further Cost Analysis}
\label{sec:const_analysis}

\noindent The primary advantage of the variance reduction technique is most apparent at the low-signal limit. In order to better quantify this, we investigate its performance as a function of the initial signal strength, i.e. the initial perturbation in the charge density $\alpha$ for the Sod's shock tube test case. For this analysis, we performed PIC and VR-PIC simulations
across a range of $\alpha$ and measured the variance relative to a high-ensemble reference solution for each case. 
\\ \ \\
Here, we calculate the relative variance by computing the ratio of $L_2^2$-norm on the difference between the low-ensemble and reference density profiles $||\rho-\rho^\mathrm{ref}||_2^2$ to the $L_2^2$-norm of the reference profile signal $||\rho^\mathrm{ref}||_2^2$. In Figure~\ref{fig:error_vs_signal}, we show the relative variance in prediction as a function of the signal strength $\alpha$. As expected, the relative variance of PIC increases quadratically as the signal magnitude $\alpha$ reduces. Furthermore, as expected, the proposed solution of the VR-PIC method inherits the celebrated feature of VR methods, i.e. the relative variance remains constant.
\\ \ \\
\noindent This result is a central validation of the efficiency of the proposed method. This demonstrates that the computational cost of the VR-PIC method is decoupled from the magnitude of the physical signal being measured, i.e. here $\alpha$. For any given error tolerance, the PIC method requires a quadratically increasing number of ensembles (and thus cost) to resolve a progressively weaker signal. However, the VR method achieves the same accuracy, providing a significant computational speed-up that quadratically increases as the problem enters the challenging low-signal regime.

\begin{figure}[H]
    \centering
    \includegraphics[width=0.55\linewidth]{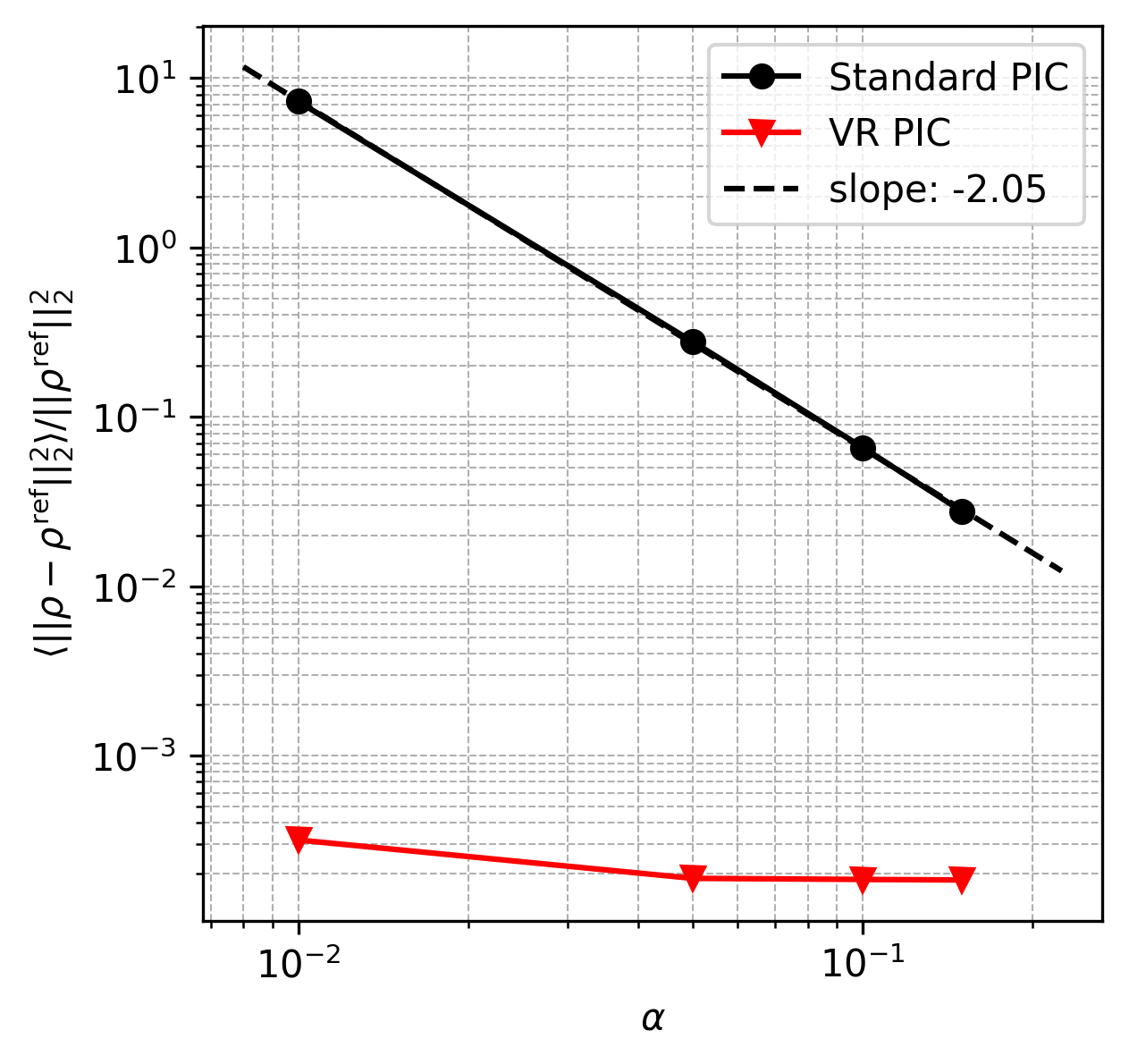}
    \caption{Relative statistical error in the density profile as a function of the initial signal strength $\alpha$. The PIC simulation error (black) increases quadratically as the signal $\alpha$ reduces, while the uncertainty of the VR-PIC solution (red) remains constant and orders-of-magnitudes smaller than the one for PIC. The data points were averaged over 100 ensembles and compared to a reference solution obtained with $5\times 10^4$ ensembles.}
    \label{fig:error_vs_signal}
\end{figure}


\subsection{Landau Damping}

\noindent In this section, we showcase the application of the introduced variance-reduced particle-in-cell method by simulating Landau Damping in $d=2$ dimensions \cite{birdsall2018plasma,chen2011energy} in the small signal regime. Using the acceptance/rejection method, we initialize the particle position and velocity following 
\begin{flalign}
    f(v;x,t=0) &= \exp(-|v|^2/2) (1 + \alpha \cos(kx_1) )(1 + \alpha \cos(kx_2)) / Z 
    \label{eq:landau_damping_dist}
\end{flalign}
where $Z$ is the normalizer such that $\int f(v;x,t) dv dx = L^2$. Here, we consider $\Omega_x=[0,L]^2$ as the physical space, i.e. $x \in \Omega_x$ where $L=4\pi$, deploy periodic boundary conditions for the particles and the Poisson equation, and set $\alpha=0.05$ and $k=0.5$. The periodic boundary condition allows us to solve the Poisson equation using the fast Fourier transformation. 
\\ \ \\
\sloppy We discretize the physical space $\Omega$ with a $100\times 100$ uniform mesh, use time steps size satisfying $\theta_0 \Delta t/\Delta x=0.05$ to respect an estimate of the Courant–Friedrichs–Lewy (CFL) condition, where $\theta_0=1$ is the considered normalized initial thermal velocity. We simulate the evolution of discretized particle distribution and compute the moments using the PIC and the developed VR-PIC methods with MxE for the moment correction with the tolerance of $\epsilon=10^{-6}$ in the Algorithm~\ref{alg:ME_optimization}. The implementation of the proposed entropic variance reduction method in C++/CUDA is available on Github\footnote{\href{https://github.com/mohsensadr/VRPIC}{https://github.com/mohsensadr/VRPIC}}. 
Here we use the NVIDIA A100 GPU with 40 GB of memory.
\\ \ \\
In Fig.~\ref{fig:landau_n}-\ref{fig:landau_rho_T_on_line}, we show the PIC and VR-PIC solution to the normalized number density $n$, the electrostatic potential $\phi$, and electric field $E = -\nabla \phi$, and temperature $T$ at $t/\Delta t = 50, 100, 150$ and $200$ steps using $10^6$ particles. Here, we also make a comparison to a reference PIC solution obtained using $ 10^8$ particles, since in practice often using more particles instead of ensemble averaging is preferred to obtain more accurate and less noisy solution. As expected, the VR-PIC solution delivers significantly smaller noise in prediction compared to the PIC while matching the noise level and accuracy of the expensive reference PIC solution gaining about significant speedup with minor overhead.

\begin{figure}[H]
  	\centering      \hspace*{-2cm}\begin{tabular}{ccccc}
 \rotatebox{90}{\footnotesize \hspace{1cm} ref. PIC  \hspace{1.3cm} VR-PIC \hspace{1.8cm} PIC} & 
   \includegraphics[scale=0.35]{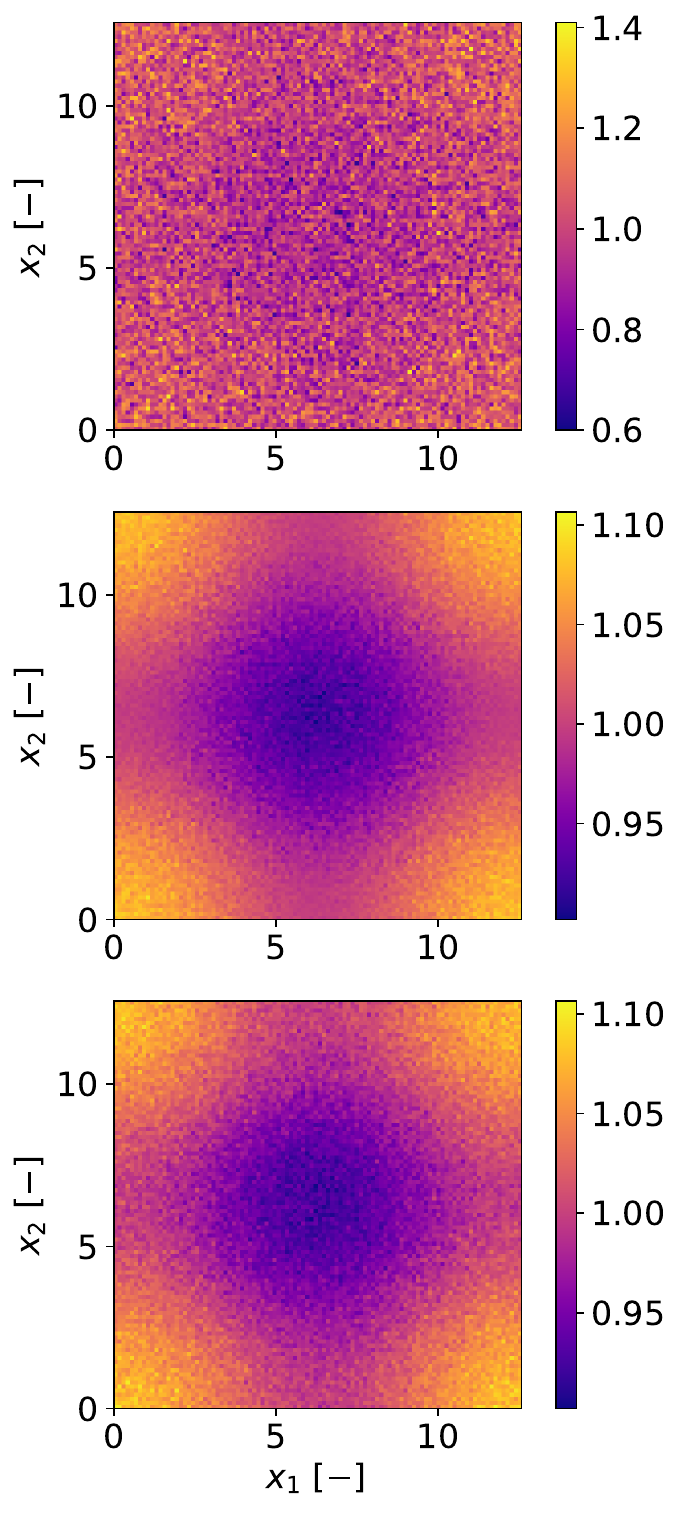}
   &
   \includegraphics[scale=0.35]{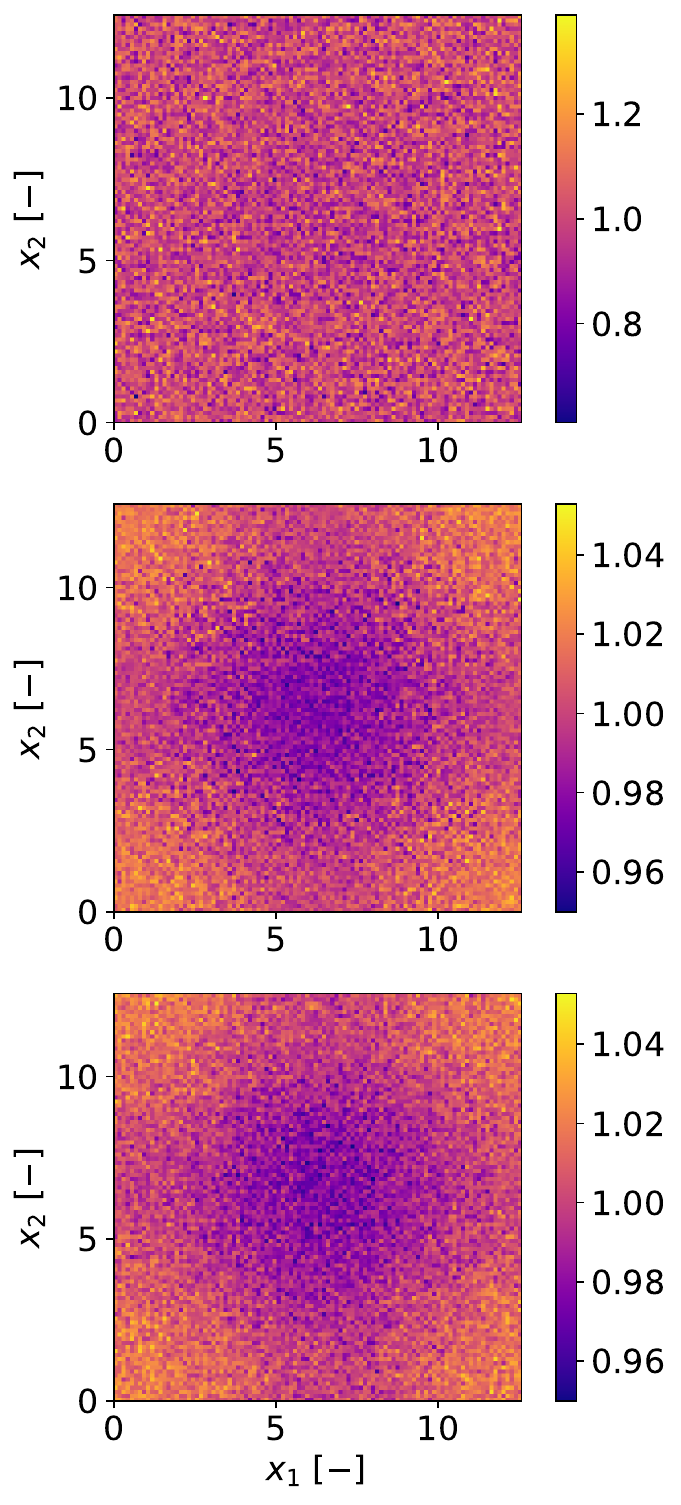}
   &
   \includegraphics[scale=0.35]{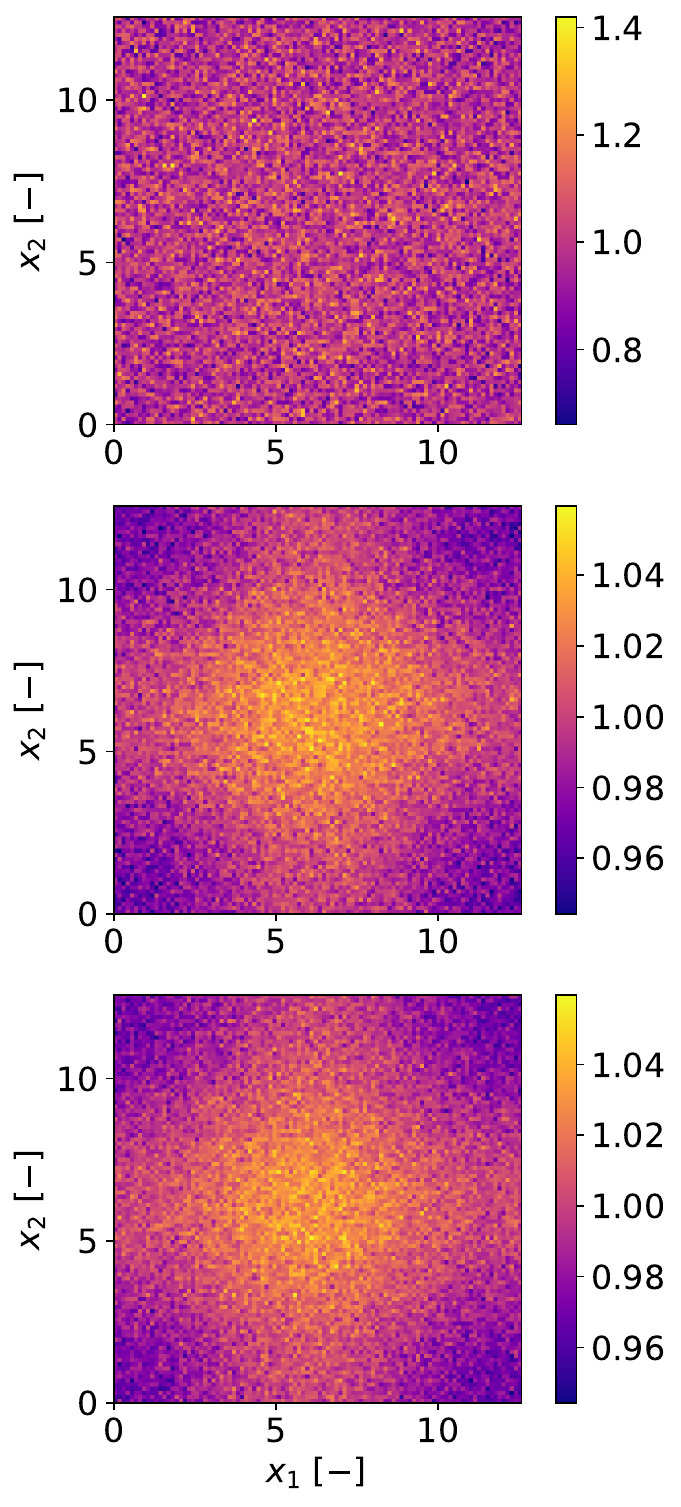}
   &
   \includegraphics[scale=0.35]{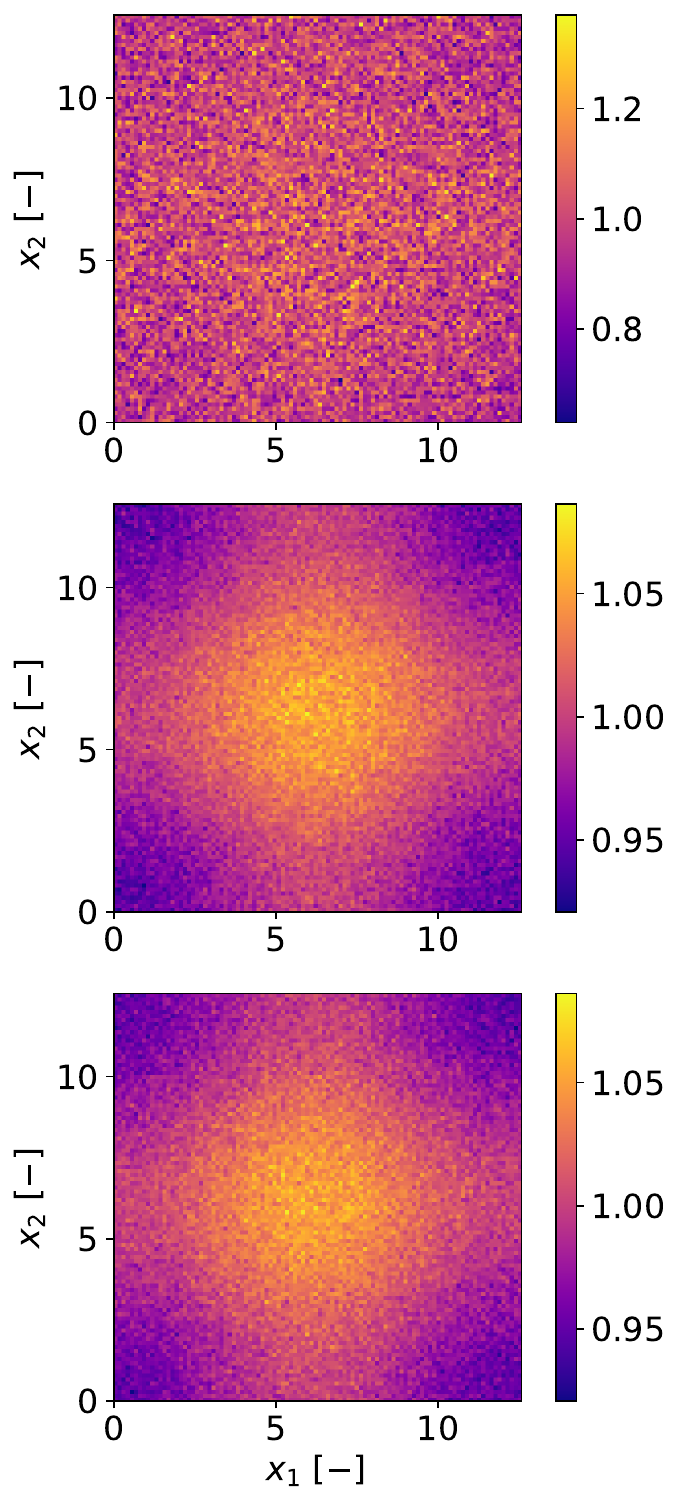}
  \\ 
  &
  (a) $t=50 \Delta t$ 
  & 
  (b)  $t=100 \Delta t$
  &
 (c) $t=150 \Delta t$ 
  & 
 (d) $t=200 \Delta t$
  \end{tabular}
  \caption{
  Evolution of the normalized charge density $\rho/\rho_0$ for the Landau Damping test case computed using PIC (top) and VR-PIC (middle) method with $10^6$ particles as well as PIC with $10^8$ as the reference (bottom), respectively, at $t/\Delta t=50, 100, 150$ and $200$ time steps.
  }
  \label{fig:landau_n}
\end{figure}

\begin{figure}[H]
  	\centering     
     \hspace*{-2cm}\begin{tabular}{ccccc}
 \rotatebox{90}{\footnotesize \hspace{1cm} ref. PIC  \hspace{1.3cm} VR-PIC \hspace{1.8cm} PIC}
 & 
   \includegraphics[scale=0.35]{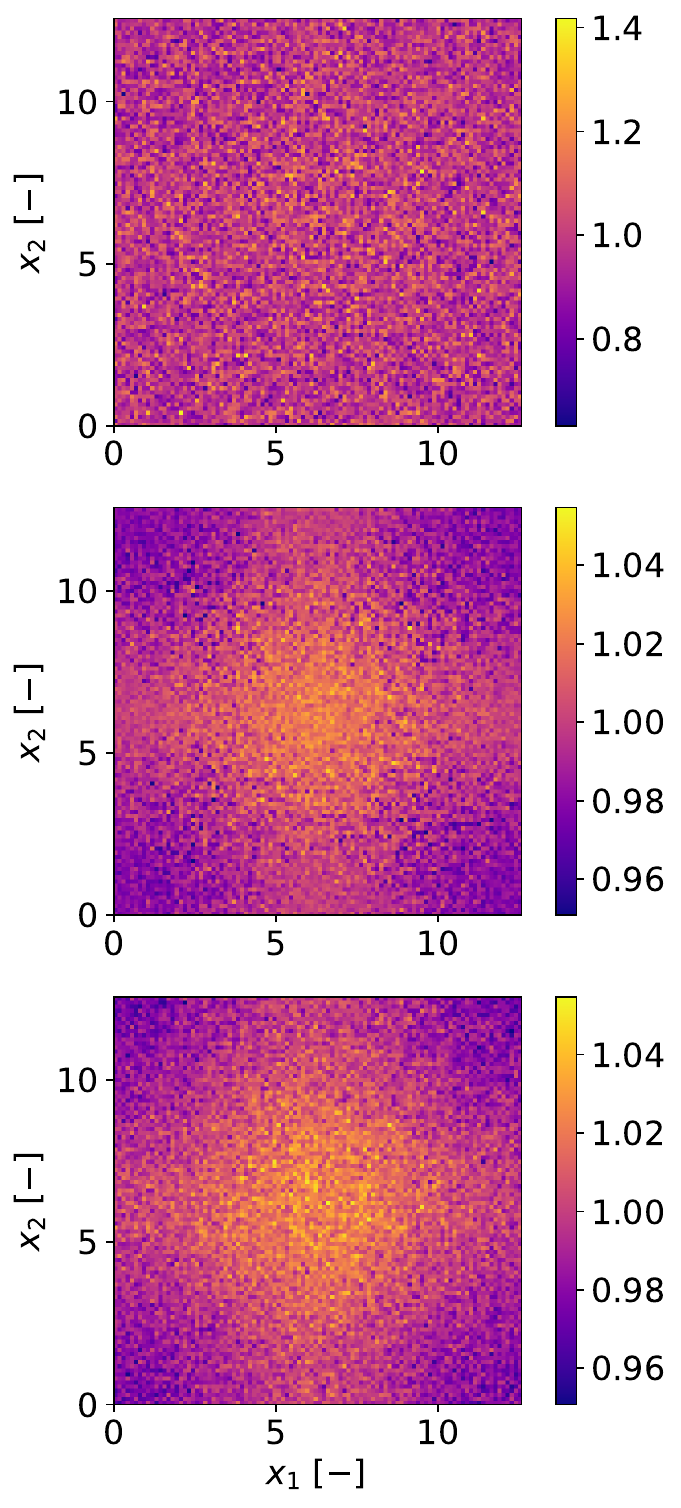}
   &
   \includegraphics[scale=0.35]{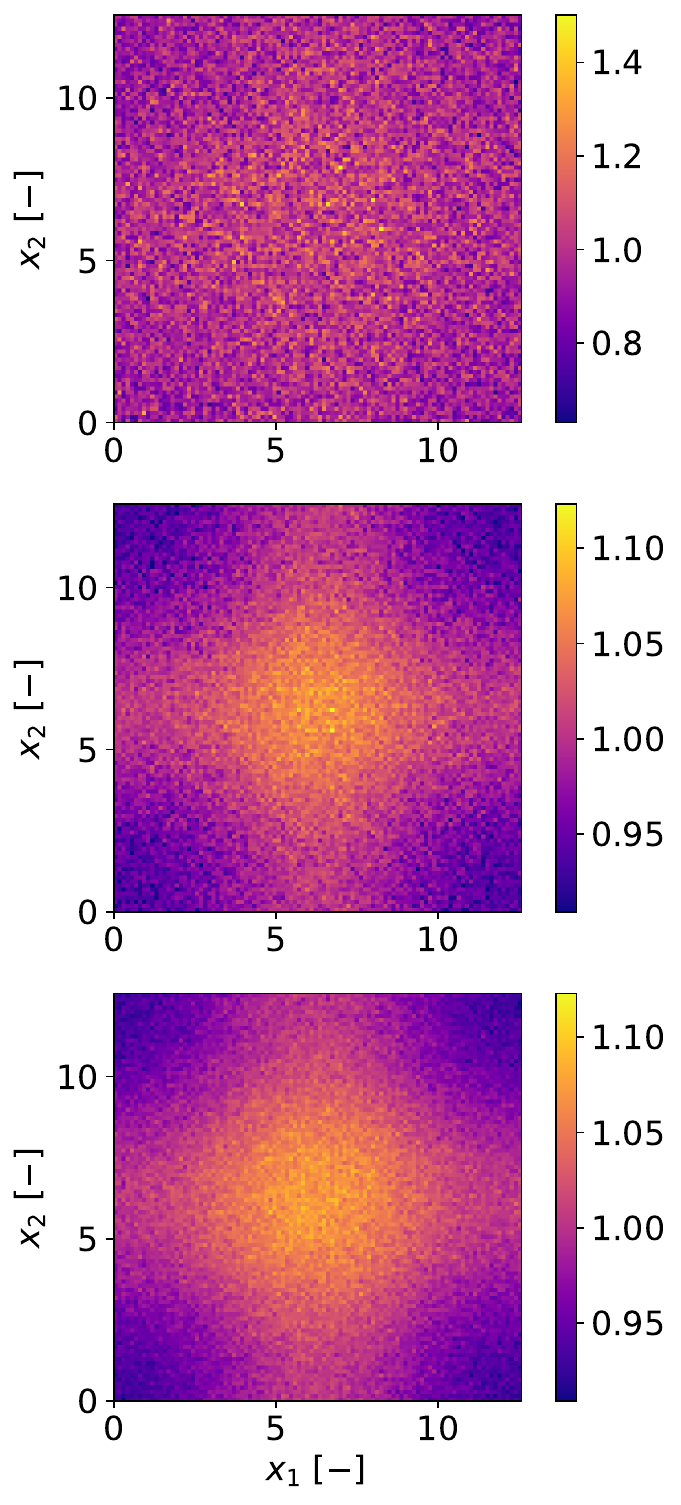}
   &
   \includegraphics[scale=0.35]{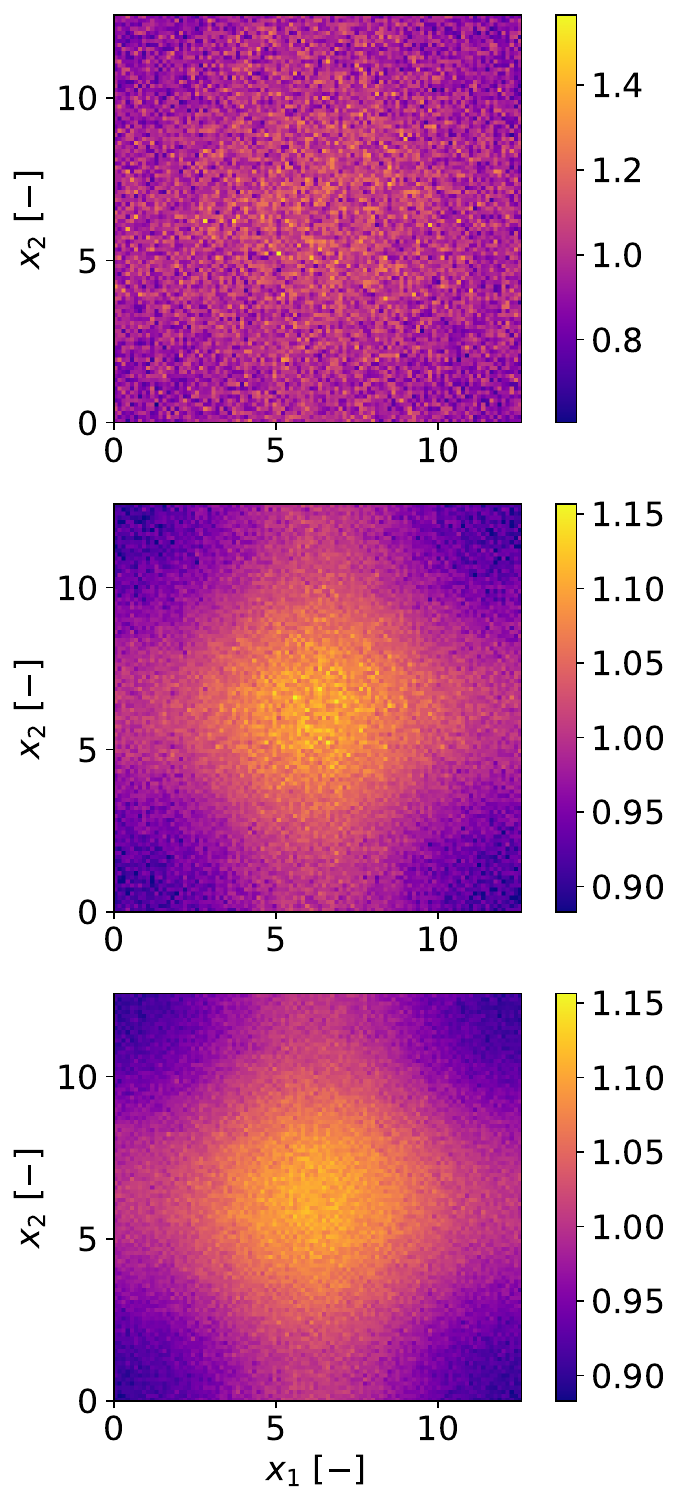}
   &
   \includegraphics[scale=0.35]{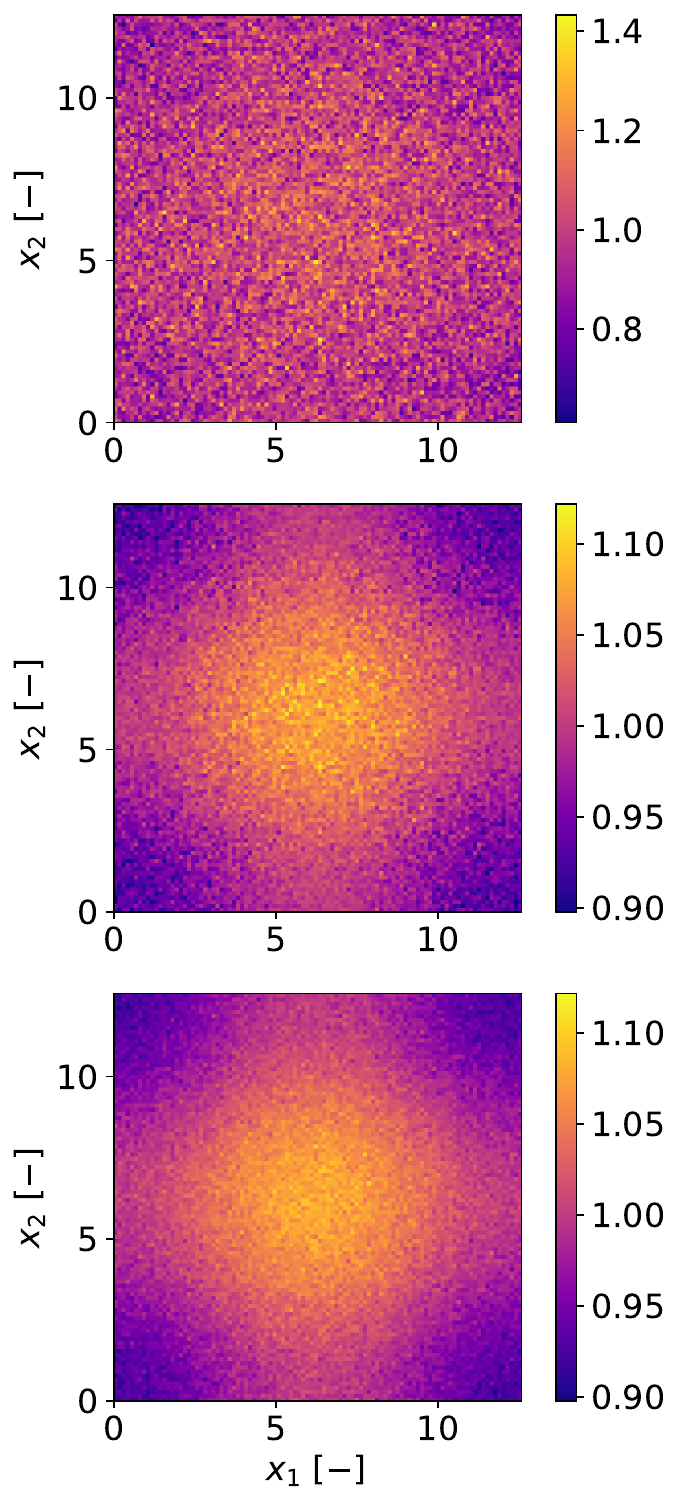}
  \\
  &
  (a) $t=50 \Delta t$ 
  & 
  (b)  $t=100 \Delta t$
  &
 (c) $t=150 \Delta t$ 
  & 
 (d) $t=200 \Delta t$
  \end{tabular}
  \caption{
  Evolution of the normalized temperature $T/T_0$ for the Landau Damping test case computed using PIC (top) and VR-PIC (middle) method with $10^6$ particles as well as PIC with $10^8$ as the reference (bottom), respectively, at $t/\Delta t=50, 100, 150$ and $200$ time steps.
  }
    \label{fig:landau_T}
\end{figure}

\begin{figure}[H]
\hspace*{-2.5cm}
    \begin{tabular}{cccc}
   \includegraphics[scale=0.5]{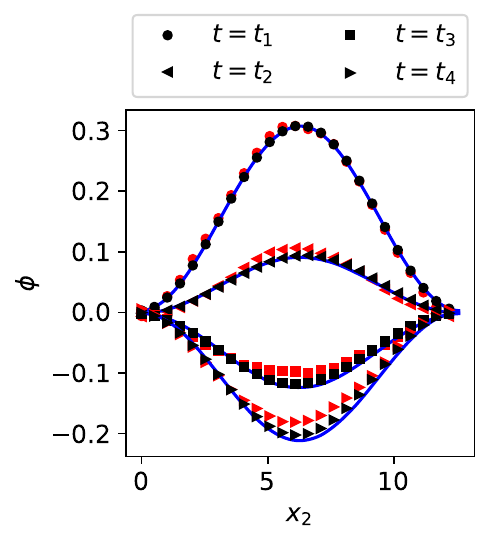}
   &
  \includegraphics[scale=0.5]{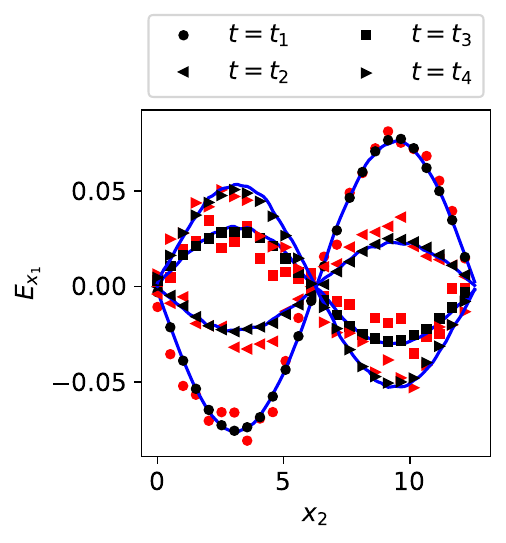}
   &
   \includegraphics[scale=0.5]{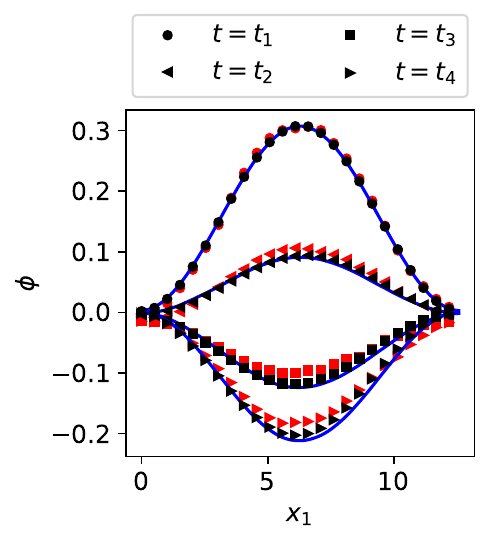}
   &
   \includegraphics[scale=0.5]{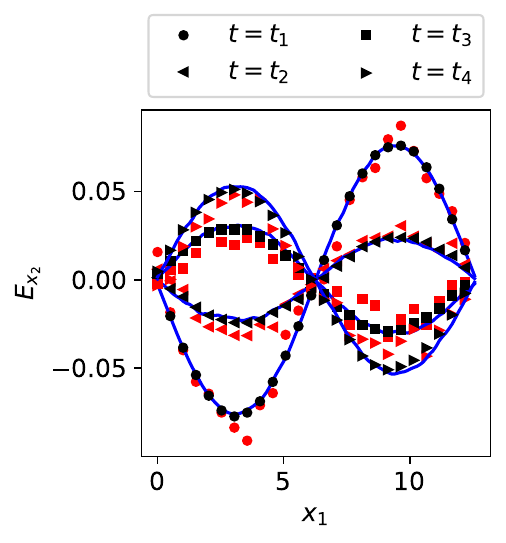}
  \\
  \hspace*{1cm} (a)  
  & 
  \hspace*{1cm} (b) 
  &
\hspace*{1cm} (c) 
 &
\hspace*{1cm} (d)
 \end{tabular}
   \caption{
  Evolution of electrostatic potential  $\phi$ on $x_1=L/2$ (a) and $x_2=L/2$ (c), as well as the electric field component $E_{x_1}$ on $x_1=L/2$ (b)  and  $E_{x_2}$ on $x_2=L/2$ (d) using PIC (red) and VR-PIC (blue) with $10^6$ particles as well as the reference PIC solution with $10^8$ (black), respectively, at $t_1,...,t_4=50\Delta t, 100\Delta t, 150\Delta t$ and $200\Delta t$.
  }
  \label{fig:landau_phi_E_on_line}
\end{figure}

\begin{figure}[H]
\hspace*{-1.5cm}
    \begin{tabular}{cccc}
    \includegraphics[scale=0.5]{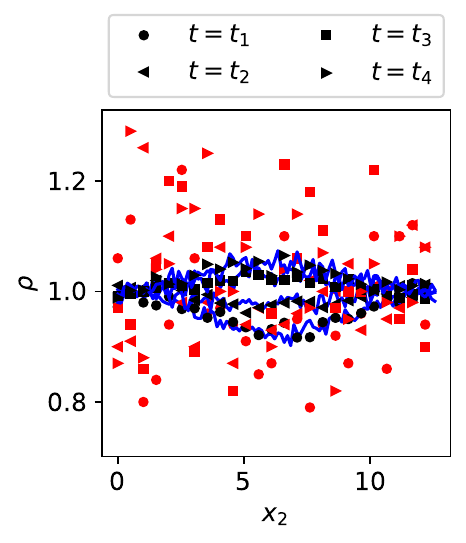}
   &
  \includegraphics[scale=0.5]{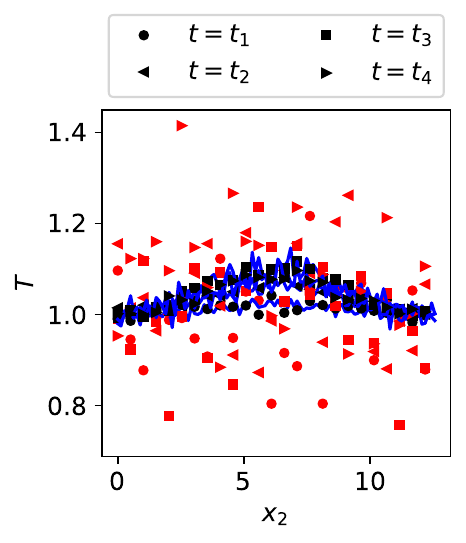}
   &
   \includegraphics[scale=0.5]{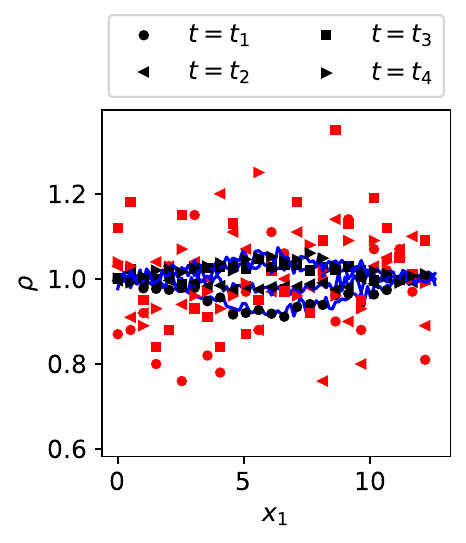}
   &
   \includegraphics[scale=0.5]{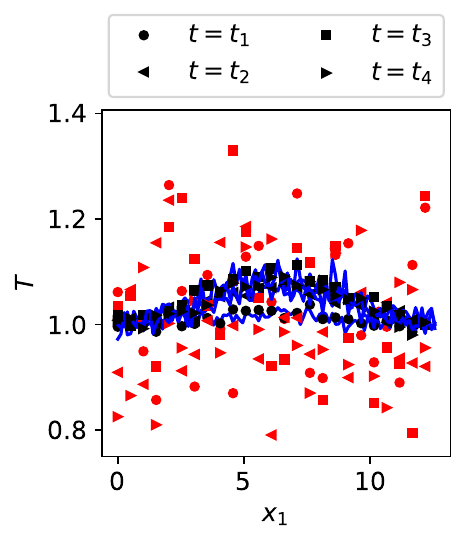}
  \\
   \hspace*{0.9cm}  (a)  
  & 
   \hspace*{0.9cm}  (b) 
  &
   \hspace*{0.9cm}  (c) 
 &
   \hspace*{0.9cm}  (d)
  \end{tabular}
  \caption{
 Evolution of charge density $\rho$ on $x_1=L/2$ (a) and $x_2=L/2$ (c), as well as the temperature $T$ on $x_1=L/2$ (b)  and on $x_2=L/2$ (d) using PIC (red) and VR-PIC (blue) with $10^6$ particles as well as the reference PIC solution with $10^8$ (black), respectively, at $t_1,...,t_4=50\Delta t, 100\Delta t, 150\Delta t$ and $200\Delta t$.
  } \label{fig:landau_rho_T_on_line}
\end{figure}

\begin{table}[H]
    \centering
    \caption{Comparison of computational cost between PIC and VR-PIC for the Landau Damping test case run for 3000 time steps and averaged over 5 ensemble.}
    \label{tab:computational_cost_landau_damping}
    \begin{tabular}{c c c c}
        \toprule
        $N_{p,\mathrm{tot}}$ &
        {Method} & {Execution Time [s]} \\
        \midrule
        $10^6$ & PIC &  34.5  
        \\
        & VR-PIC &  42.9  
        \\ \midrule
        $10^7$ & PIC & 80.7 
        \\
        & VR-PIC & 180.3 
        \\
        \midrule
        $10^8$ & PIC & 735.1 
        \\
        & VR-PIC & 1740.7  
         \\
        \bottomrule
    \end{tabular}
\end{table}

\noindent We further analyze the accuracy of the VR-PIC method using the analytical solution to the damping of the electric field in $L^2$-norm, i.e. $(\int |E|^2 dx)^{1/2}$, as the reference which is a commonly used benchmark in the literature \cite{kormann2015semi,kirchhart2024numerical}. 
Here, we perform a convergence study in terms of number of particles and  we use the same discretizations of the physical space to solve the Poisson equation, and deploy on-average $N_p=10^2, 10^3$ and $10^4$ particles per cell, i.e. in total $N_{p,\mathrm{tot}}=10^6,10^7$ and $10^8$ particles. Interestingly, as shown in Fig.~\ref{fig:damping_E2_convergance_Np}, 
the VR-PIC estimate of electric fields follows the theoretical damping for a longer time than the one computed using the PIC calculations, leading to a significant speedup. For example, the solution of PIC with $N_{p,\mathrm{tot}}=10^8$ particles is close to the PIC-VR solution with only $N_{p,\mathrm{tot}}=10^6$ particles achieving $\approx17$ times speed up with 100 times less memory footprint. As discussed in \S~\ref{sec:const_analysis}, this speed up increases quadratically as the magnitude of the signal, e.g. the perturbation $\alpha$ in the initial distribution eq.~\eqref{eq:landau_damping_dist}, becomes smaller. We report the execution times as a reference for the reader in Table~\ref{tab:computational_cost_landau_damping}.

\begin{figure}[H]
    \centering
    \includegraphics[width=1\linewidth]{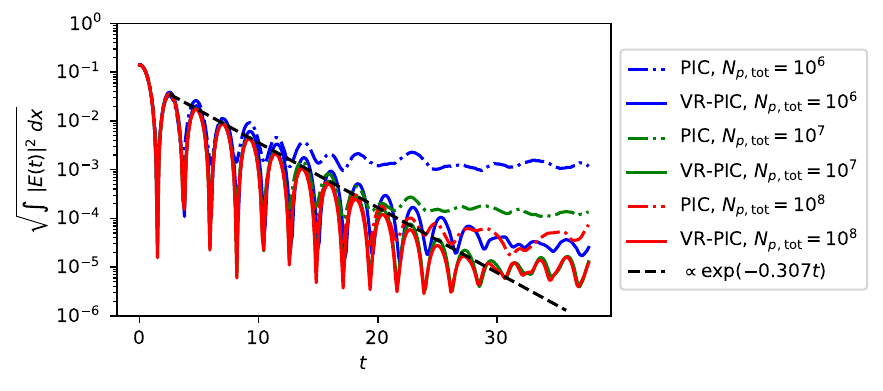}
    \caption{
    Convergence study on the damping of $ |E|$ in $L^2$-norm obtained with PIC and VR-PIC estimates using $10^6, 10^7$ and $10^8$ particles per cell on a  $100\times 100$ uniform mesh.
    }
    \label{fig:damping_E2_convergance_Np}
\end{figure}

\section{Conclusion}
\label{sec:conclusion}
\noindent In this paper, we presented a new conservative and entropic variance reduction method via importance weights with the local equilibrium distribution as the control variate for the particle-in-cell solution to the Vlasov-Poisson equation. 
Here, we observe that by keeping the weights constant during the kick, the evolution of importance weights remain stable in the low signal regime at the expense of introducing bias. In the introduced method, we propose a first-order correction to the weight distribution based on a semi-analytical solution to the relaxation of moments during kick. Then, as a mean to correct the weights after the kick, we deploy the maximum cross-entropy formulation which minimizes the bias while enforces the moment conservation at the particle level.
\\ \ \\
In several test cases, we analyzed the accuracy and efficiency of the proposed method. This includes Sod's shock tube in one and weak Landau Damping in two dimensional setting and made comparison to the analytical solution. We observe that the entropic VR-PIC achieves 1-4 orders of magnitude reduction in computational cost depending on the weakness of the signal while maintaining a high accuracy in the prediction. Given its success, we intend to extend the range of application of the entropic variance reduction method to other relevant kinetic equations such as electromagnetic Vlasov–Maxwell PIC in our future studies


\bibliographystyle{elsarticle-num} 
\bibliography{refs}




\end{document}